\newcommand\Windows[1]{\texttt{#1}}
\newcommand\Name[1]{\textsf{#1}}
\renewcommand\cap[3]{\caption[#2]{\label{#1}\textsc{#2}. \small\textit{#3}}}
\newcommand{\defeq}{\mathrel{\mathop:}=}
\newtheorem{mythm}{Theorem}
\newtheorem{mycorr}{Corollary}
\newcommand\sad[3]{\begin{itemize}\item[] \textit{Summary:} #1 \item[] \textit{Advantages:} #2\item[] \textit{Disadvantages:} #3\end{itemize}}
\DeclareMathOperator*{\argmax}{\mathrm{arg\,max}}
\begin{document}
%\listoftodos{}

%\title{Toward Open World Intrusion Detection of Stealth Malware}
\title{A Survey of Stealth Malware Attacks, Mitigation Measures, and Steps Toward Autonomous Open World Solutions}

\author{Ethan~M.~Rudd, %,~\IEEEmembership{Member,~IEEE,}
        Andras~Rozsa, %,~\IEEEmembership{Fellow,~OSA,}
        Manuel~G\"unther, %,~\IEEEmembership{Fellow,~OSA,}
        and~Terrance~E.~Boult%,~\IEEEmembership{Life~Fellow,~IEEE}% <-this % stops a space
\IEEEcompsocitemizethanks{\IEEEcompsocthanksitem E. Rudd, A. Rozsa, M. G\"unther, and T. Boult are with the Vision and Security Technology (VAST) Lab, Department
of Computer Science, University of Colorado at Colorado Springs.\protect\\
% note need leading \protect in front of \\ to get a newline within \thanks as
% \\ is fragile and will error, could use \hfil\break instead.
E-mail: see http://vast.uccs.edu/contact-us}
%\IEEEcompsocthanksitem J. Doe and J. Doe are with Anonymous University.}% <-this % stops a space
\thanks{Manuscript received February 19, 2016. Revised August 21, 2016. Revised September 16th, 2016 . Accepted December 1, 2016.}}

% The paper headers
%\markboth{IEEE COMMUNICATION SURVEYS \& TUTORIALS,~Vol.~??, No.~??, ??~????}%
\markboth{PRE-PRINT OF MANUSCRIPT ACCEPTED TO IEEE COMMUNICATION SURVEYS \& TUTORIALS}%
{Shell \MakeLowercase{\textit{et al.}}: Bare Advanced Demo of IEEEtran.cls for IEEE Computer Society Journals}

\IEEEtitleabstractindextext{%
\begin{abstract}
As our professional, social, and financial existences become increasingly digitized and as our government, healthcare, and military infrastructures rely more on computer technologies, they present larger and more lucrative targets for malware.
Stealth malware in particular poses an increased threat because it is specifically designed to evade detection mechanisms, spreading dormant, in the wild for extended periods of time, gathering sensitive information or positioning itself for a high-impact zero-day attack.
Policing the growing attack surface requires the development of efficient anti-malware solutions with improved generalization to detect novel types of malware and resolve these occurrences with as little burden on human experts as possible.

In this paper, we survey malicious stealth technologies as well as existing solutions for detecting and categorizing these countermeasures autonomously.
While machine learning offers promising potential for increasingly autonomous solutions with improved generalization to new malware types, both at the network level and at the host level, our findings suggest that several flawed assumptions inherent to most recognition algorithms prevent a direct mapping between the stealth malware recognition problem and a machine learning solution. The most notable of these flawed assumptions is the \emph{closed world} assumption: that no sample belonging to a class outside of a static training set will appear at query time. We present a formalized \emph{adaptive open world framework} for stealth malware recognition and relate it mathematically to research from other machine learning domains.
%\todo[inline]{Write a proper abstract}
\end{abstract}

% Note that keywords are not normally used for peerreview papers.
\begin{IEEEkeywords}
Stealth, Malware, Rootkits, Intrusion Detection, Machine Learning, Open Set, Recognition, Anomaly Detection, Outlier Detection, Extreme Value Theory, Novelty Detection
\end{IEEEkeywords}}

% make the title area
\maketitle

\IEEEdisplaynontitleabstractindextext
\IEEEpeerreviewmaketitle

\ifCLASSOPTIONcompsoc
\IEEEraisesectionheading{\section{Introduction}\label{sec:introduction}}
\else
\section{Introduction}
\label{sec:introduction}
\fi

\IEEEPARstart{M}{alwares} have canonically been lumped into categories such as viruses, worms, Trojans, rootkits, etc. Today's advanced malwares, however, often include many components with different functionalities. For example, the same malware might behave as a virus when spreading over a host, behave as a worm when propagating through a network, exhibit \emph{botnet} behavior when communicating with command and control (C2) servers or synchronizing with other infected machines, and exhibit \emph{rootkit} behavior when concealing itself from an intrusion detection system (IDS).
A thorough study of all aspects of malware is important for developing security products and computer forensics solutions, but stealth components pose particularly difficult challenges. The ease or difficulty of repairative measures is irrelevant if the malware can evade detection in the first place.

While some authors refer to all stealth malwares as  \emph{rootkits}, the term rootkit properly refers to the modules that redirect code execution and subvert expected operating system functionalities for the purpose of maintaining stealth. With respect to this usage of the term, rootkits deviate from other stealth features such as elaborate code mutation engines that aim to change the appearance of malicious code so as to evade signature detection without changing the underlying functionality.

As malwares continue to increase in quantity and sophistication, solutions with improved generalization to previously unseen malware samples/types that also offer sufficient diagnostic information to resolve threats with as little human burden as possible are becoming increasingly desirable.
Machine learning offers tremendous potential to aid in stealth malware intrusion recognition, but there are still serious disconnects between many machine learning based intrusion detection ``solutions'' presented by the research community and those actually fielded in IDS software. Robin and Paxson\cite{sommer2010outside} discuss several factors that contribute to this disconnect and suggest useful guidelines for applying machine learning in practical IDS settings. Although their suggestions are a good start, we contend that refinements must be made to machine learning algorithms themselves in order to effectively apply such algorithms to the recognition of stealth malware. Specifically, there are several flawed assumptions inherent to many algorithms that distort their mappings to realistic stealth malware intrusion recognition problems. The chief among these is the \emph{closed-world} assumption -- that only a fixed number of known classes that are available in the training set will be present at classification time.

Our contributions are as follows:
\begin{itemize}
\item \textit{We present the first comprehensive academic survey of stealth malware technologies and countermeasures.} There have been several light and narrowly-scoped academic surveys on rootkits\cite{li2011survey,kim2012brief,shields2008survey}, and many broader surveys on the problem of intrusion detection, e.g. \cite{axelsson2000intrusion,vasilomanolakis2015taxonomy,zuech2015intrusion}, some specifically discussing machine learning intrusion detection techniques\cite{tsai2009intrusion,garcia2009anomaly,lee1999data,sommer2010outside}. However, none of these works come close to addressing the mechanics of stealth malware and countermeasures with the level of technical and mathematical detail that we provide. Our survey is \emph{broader in scope} and \emph{more rigorous in detail} than any existing academic rootkit survey and provides not only detailed discussion of the mechanics of stealth malwares that goes far beyond rootkits, but an overview of countermeasures, with rigorous mathematical detail and examples for applied machine learning countermeasures.

\item \textit{We analyze six flawed assumptions inherent to many machine learning algorithms} that hinder their application to stealth malware intrusion recognition and other IDS domains.

\item \textit{We propose an adaptive open world mathematical framework for stealth malware recognition} that obviates the six inappropriate assumptions. Mathematical proofs of relationships to other intrusion recognition algorithms/frameworks are included, and the formal treatment of open world recognition is mathematically generalized beyond previous work on the subject.
\end{itemize}

Throughout this work, we will mainly provide examples for the \Name{Microsoft Windows} family of operating systems, supplementing where appropriate with examples from other OS types.
Our primary rationale for this decision is that, according to numerous recent tech reports from anti-malware vendors and research groups \cite{kaspersky2015,hpe2015,hpe2016,ibm2016,symantecistr2015,symantecistr2016,mcaffee2016,microsoft_security,mandiant_consulting}, malware for the \Name{Windows} family of operating systems is still far more prevalent than for any other OS type (cf. Fig.~\ref{fig:malware_proportions}). Our secondary rationale is that within the academic literature that we examined, we found comparatively little research discussing \Name{Windows} security. We believe that this gap needs to be filled. 
Note that many of the stealth malware attacks and defenses that apply to \Name{Windows} have their respective analogs in other systems, but each system has its unique strengths and susceptibilities.
This can be seen by comparing our survey to parts of\cite{faruki2015android}, in which Faruki et al. provide a survey of generic Android security issues and defenses.
Nonetheless, since our survey is about stealth malware; not exclusively \Name{Windows} stealth malware, we shall occasionally highlight techniques specific to other systems and mention discrepancies between systems.
\Name{Unix/Linux} rootkits shall also be discussed because the development of \Name{Unix} rootkits pre-dates the development of \Name{Windows}.
%has the longest tradition and are best understood.
Any system call will be marked in \Windows{a special font}, while proper nouns are \Name{highlighted differently}.
A complete list of \Name{Windows} system calls discussed in this paper is given in Tab.~\ref{tab:SysCalls} of the appendix.

\begin{figure}[ht]
  \centering
  \subfloat[\label{fig:malware:proportions}Proportion of Malware by Platform Type]{\includegraphics[width=\linewidth]{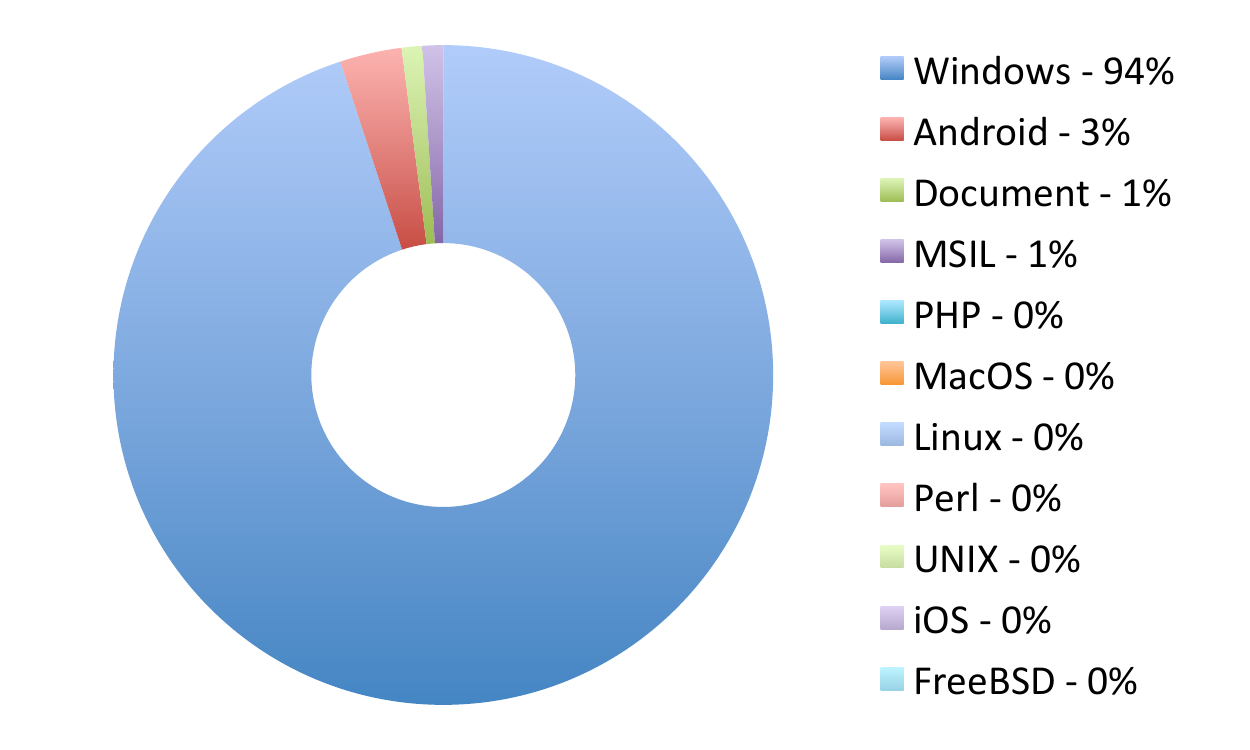}}\\
  \subfloat[\label{fig:malware:rates}Growth Rate in Malware Proportion by Platform Type]{\includegraphics[width=\linewidth]{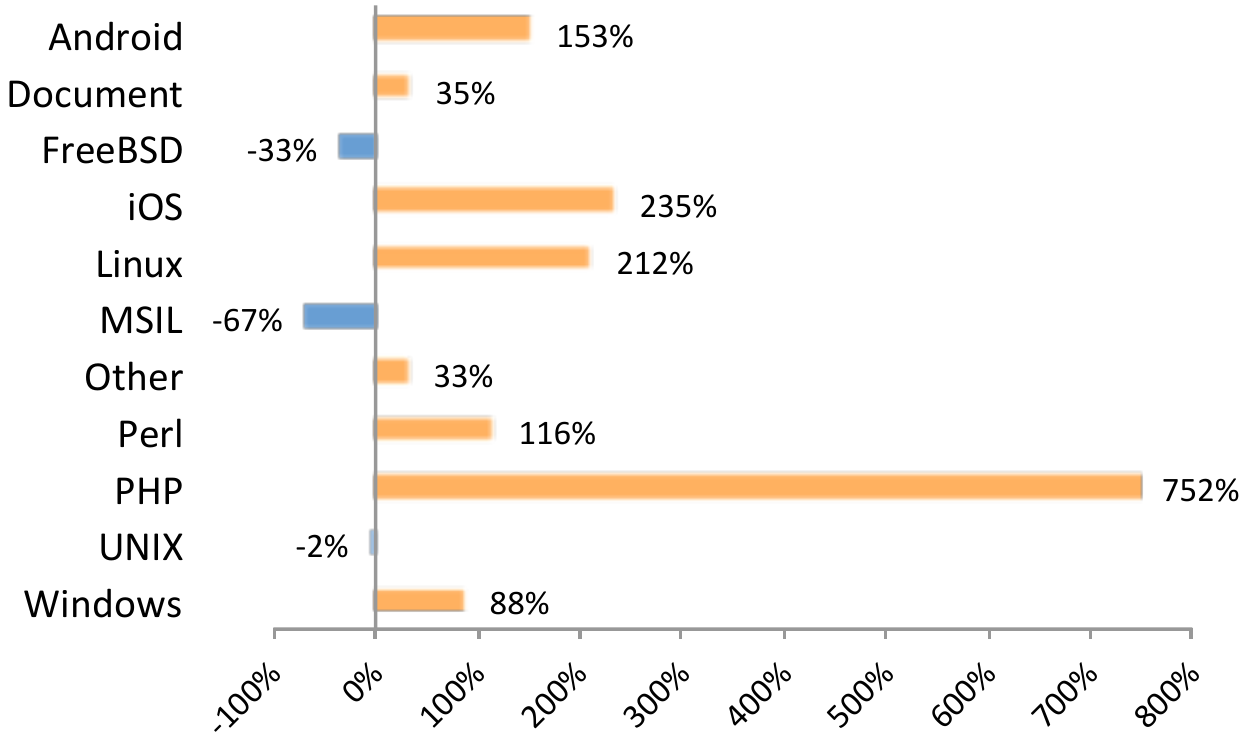}}
  \cap{fig:malware_proportions}{Malware proportions and rates by platform}{
As shown in \protect\subref*{fig:malware:proportions}, malware designed specifically for the \Name{Microsoft Windows} family of operating systems accounts for over 90\,\% of all malware. While malware for other platforms is growing rapidly, at current growth rates, shown in \protect\subref*{fig:malware:rates}, the quantity of malware designed for any other platform is unlikely to surpass the quantity of \Name{Windows} malware any time soon. Examining overall growth rates per platform, we see higher growth rates in non windows malware, but a high growth rate on a small base is still quite small in terms of overall impact. For \Name{Windows} the 88\% growth rate is on a base of 135 million malware samples, which translates into about 118 million new \Name{Windows} malwares. In comparison the high growth rate for \Name{Apple iOS}, with an increase of more than 230\%, is on a base of 30,400 samples, with the total number of discovered \Name{Apple} malware samples in 2015 just under 70,000; very small compared to the number of \Name{Windows} malware samples as well as the 4.5 milion \Name{Android} malware samples.
Numbers for these plots were obtained from a 2016 HP Enterprise threat report \cite{hpe2016}.
}
\end{figure}

The remainder of this paper is structured as follows:
In Sec.~\ref{sec:stealth_survey} we present the problems inherent to stealth malware by providing a comprehensive survey of stealth malware technologies, with an emphasis on rootkits and code obfuscation.
In Sec.~\ref{sec:countermeasure_survey}, we discuss stealth malware countermeasures, which aim to protect the integrity of areas of systems known to be vulnerable to attacks.
These include network intrusion recognition countermeasures as well as host intrusion recognition countermeasures.
Our discussion highlights the need for these methods to be combined with more generic recognition techniques.
In Sec.~\ref{sec:signatures_heuristics}, we discuss some of these more generic stealth malware countermeasures in the research literature, many of which are based on machine learning.
In Sec.~\ref{sec:open_world_ids}, we identify six critical flawed algorithmic assumptions that hinder the utility of machine learning approaches for malware recognition and more generic IDS domains.
We then formalize an \emph{adaptive open world framework} for stealth malware recognition, bringing together recent advances in several areas of machine learning literature including intrusion detection, novelty detection, and other recognition domains.
Finally, Sec.~\ref{sec:conclusion} concludes this survey.

\section{A Survey of Existing Stealth Malware}
\label{sec:stealth_survey}

We discuss four types of stealth technology: rootkits, code mutation, anti-emulation, and targeting mechanisms.
Before getting into the details of each, we summarize them at a high level.
Note that current malware usually uses a mixture of several or all concepts that are described in this section.
For example, a rootkit might maintain malicious files on disk that survive reboots, while using hooking techniques to hide these files and processes so that they cannot be easily detected and removed, and applying code mutation techniques to prevent anti-malware systems from detecting running code.

{\rm Rootkit technology} refers to software designed for two purposes: maintaining stealth presence and allowing continued access to a computer system. The stealth functionality includes hiding files, hiding directories, hiding processes, masking resource utilization, masking network connections, and hiding registry keys. Not all rootkit technology is malicious, for example, some anti-malware suites use their own rootkit technologies to evade detection by malware. \Name{Samhain} \cite{chuvakin2003ups,petroni2004copilot}, for example, was one of the first pieces of anti-malware (specifically anti-rootkit) software to hide its own presence from the system, such that a malware or hacker would not be able to detect and, thus, kill off the \Name{Samhain} process. Whether rootkit implementations are designed for malicious or benign applications, many of the underlying technologies are the same. In short, rootkits can be thought of as performing man-in-the-middle attacks between different components of the operating system. In doing so, different rootkit technologies employ radically different techniques. In this section, we review four different types of rootkits.

Unlike rootkit technologies, code mutation does not aim to change the dynamic functionality of the code. Instead it aims to alter the appearance of code with each generation, generally at the binary level, so that copies of the code cannot be recognized by simple pattern-matching algorithms.
%{\rm Code mutation}, unlike rootkit technologies,
%are not recognized by anti-malware. Code mutation does not aim to change the functionality of malicious code or disguise its dynamic behavior but instead seeks to disguise the binary

%\td{Is sandbox detection really just for code mutation.. or is this an issue for all stealth malware?. also I added \tm to the 4 types so its clear these are the terms being defined/summarized }
%Due in part to the success of code mutation,

Due in part to the difficulties of static code analysis, and in part to protect system resources, the behavior of suspicious executables is often analyzed by running these executables in virtual sandboxed environments. {\em Anti-emulation} technologies aim to detect these sandboxes; if a sandbox is detected, they alter the execution flow of malicious code in order to stay hidden.

Finally, {\em targeting mechanisms} seek to manage the spread of malware and therefore minimize risk of detection and collateral damage, allowing it to remain in the wild for a longer period of time.

\subsection{Type 1 Rootkits: Malicious System Files on Disk}
\sad{Mimic system process files.}{Easy to install, survives reboots.}{Easy to detect and remove.}
The first-generation of rootkits masqueraded as disk-resident system programs (e.g., \Windows{ls}, \Windows{top}) on early \Name{Unix} machines, pre-dating the development of \Name{Windows}.
These early implementations were predominantly designed to obtain elevated privileges, hence the name ``rootkit''.
Modern rootkit technologies are designed to maintain stealth, perform activity logging (e.g., key logging), and set up backdoors and covert channels for command and control (C2) server communication \cite{kim2012brief}.

Although modern rootkits (types 2, 3, and 4) rely on privilege escalation for their functionalities, their main objective is stealth (although privilege escalation is often assumed) \cite{butler2005windows1}. %\td{really incidental?  if not root they could not do much dammage.. maybe in addition to?  Also is this still type 1?}
Since first-generation rootkits reside on disk, they are easily detectable via a comparison of their hashes or checksums to hashes or checksums of system files. Due to early file integrity checkers such as \Name{Tripwire} \cite{kim1994tripwire}, first-generation rootkits have greatly decreased in prevalence and modern rootkits have trended toward memory residency over disk residency\cite{hoglund2005rootkits,szor2005theart}. This should not be conflated with saying that modern malwares are trending away from disk presence -- e.g. \Name{Gapz}\cite{Gaptz} and \Name{Olmasco} \cite{Olmasco} are advanced bootkits with persistent disk data. As we shall see below, many modern rootkits are specifically designed to intercept calls that enumerate files associated with a specific malware and hide these files from the file listing.

\subsection{Type 2 Rootkits: Hooking and in-Memory Redirection of Code Execution}
\label{sec:hooking}
\sad{Code injection by modifying pointers to libraries/functions or by explicit insertion of code.}{Difficult to differentiate genuine and malicious hooking.}{Difficult to inject.}
Second-generation rootkits hijack process memory and divert the flow of code execution so that malicious code gets executed.
This rootkit technique is generally referred to as \emph{hooking}, and can be done in several ways \cite{butler2005windows1}; e.g., via modification of function pointers to point to malicious code or via inline function patching -- an approach involving overwriting of code; not just pointers\cite{butler2004vice}.
For readability, however, we use the term hooking to refer to any in-memory redirection of code execution.

Rootkits use hooking to alter memory so that malicious code gets executed, usually prior to or after the execution of a legitimate operating system call\cite{butler2004vice,szor2005theart}.  This allows the rootkit to filter return values or functionality requested from the operating system. There are three types of hooking\cite{hoglund2005rootkits}: user-mode hooking, kernel-mode hooking, and hybrid hooking.

Hooking in general is not an inherently malicious technique. Legitimate uses for hooking exist, including hot patching, monitoring, profiling, and debugging. Hooking is generally straightforward to detect, but distinguishing legitimate hooking instances from (malicious) rootkit hooking is a challenging task\cite{hoglund2005rootkits,szor2005theart}.

\subsubsection{User-Mode Hooking}
\label{sec:userhooking}
\sad{Injection of code into User DLLs.}{Difficult to classify as malicious.}{Easy to detect.}
\begin{figure*}[t]
  \centering
  \subfloat[\label{fig:Hooking:Normal}Normal Operation]{\begin{minipage}{.39\textwidth}\centering\includegraphics[scale=0.23]{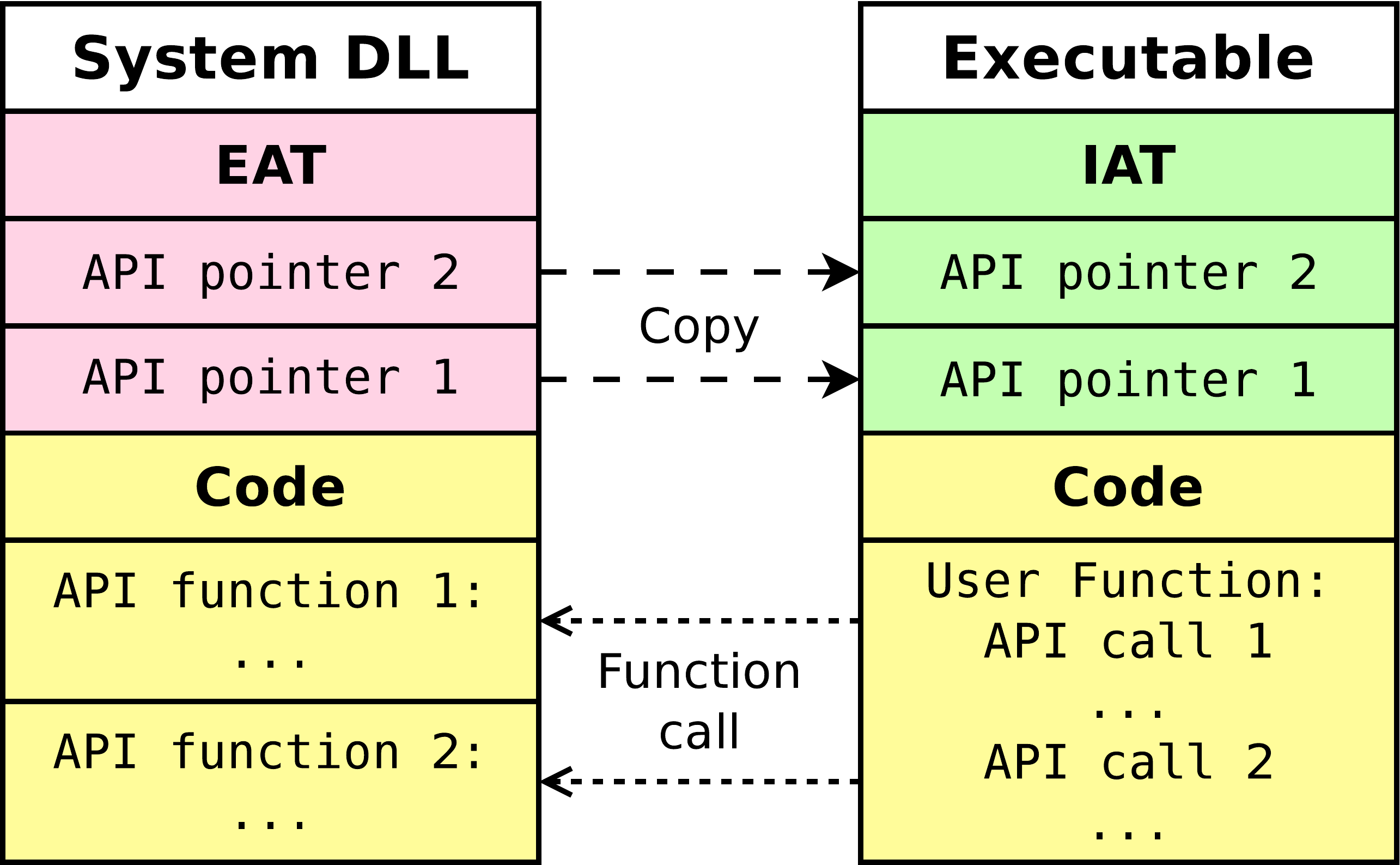}\end{minipage}}
  \subfloat[\label{fig:Hooking:Infected}Infected Operation]{\begin{minipage}{.59\textwidth}\centering\includegraphics[scale=0.23]{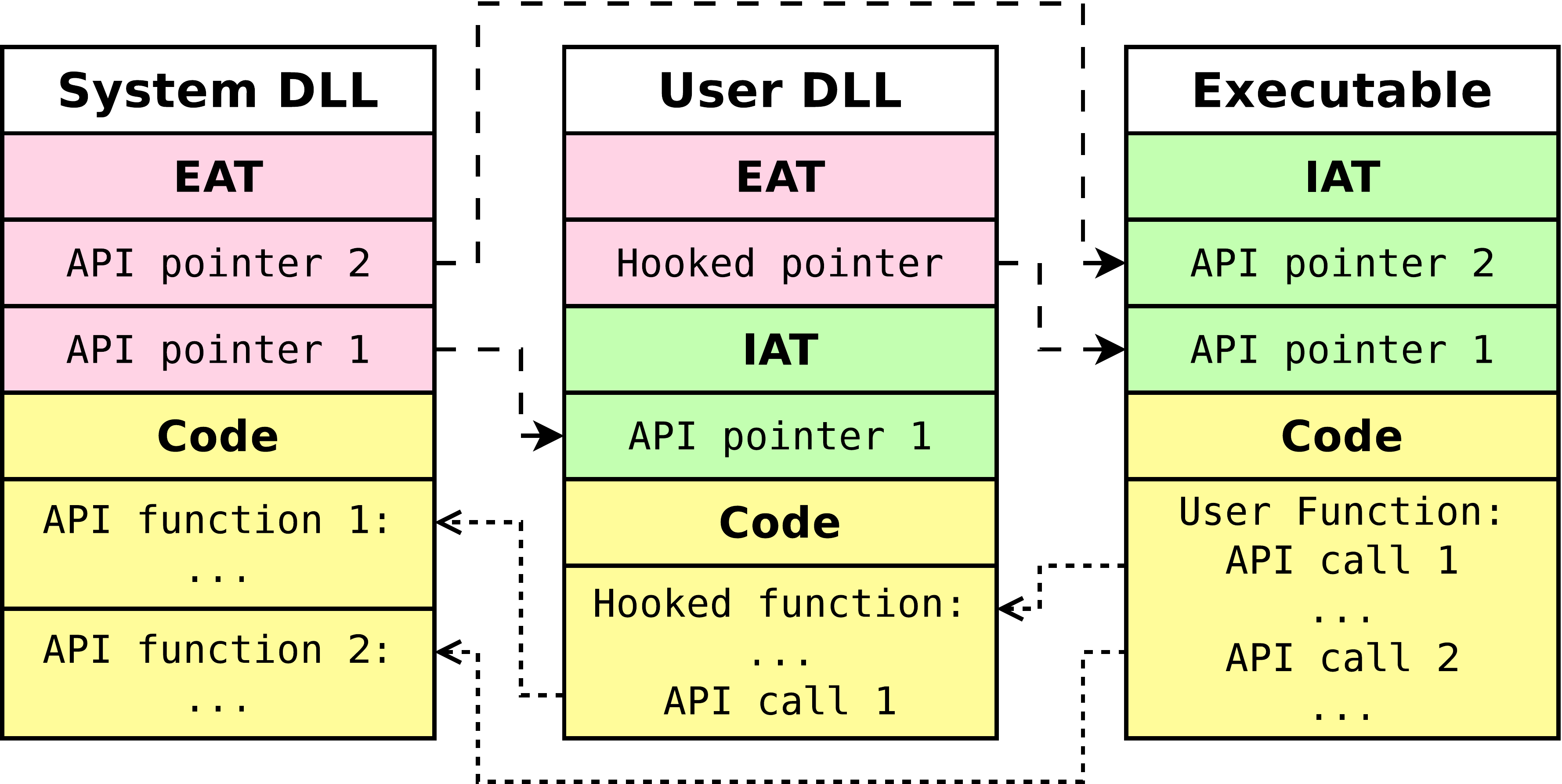}\end{minipage}}
  \cap{fig:Hooking}{Hooking}{This figure shows an example of code redirection on shared library imports. \protect\subref*{fig:Hooking:Normal} displays the normal operation of API calls, where API pointers of the DLL's \Windows{EAT} are copied into the executable's \Windows{IAT}. \protect\subref*{fig:Hooking:Infected} shows how \Windows{IAT} hooking injects (malicious) code from a user DLL before executing the original \texttt{API call 1}.}
\end{figure*}

To improve resource utilization and to provide an organized interface to request kernel resources from user space, much of the \Name{Win32} API is implemented as dynamically linked libraries (DLLs) whose callable functions are accessible via tables of function pointers.
DLLs allow multiple programs to share the same code in memory without requiring the code to be resident in each program's binary \cite{microsoft2007what}.
In and of themselves, DLLs are nothing more than special types of portable executable (PE) files.
Each DLL contains an \Windows{Export Address Table} (\Windows{EAT}) of pointers to functions that can be called outside of the DLL.
Other programs that wish to call these exported functions generally have an \Windows{Import Address Table} (\Windows{IAT}) containing pointers to the DLL functions in their PE images in memory.
This lays the ground for the popular user-mode rootkit exploit known as \emph{\Windows{IAT} hooking}\cite{kim2012brief}, in which the rootkit changes the function pointers within the \Windows{IAT} to point to malicious code.
Fig.~\ref{fig:Hooking} illustrates both malicious and legitimate usage of \Windows{IAT} hooks.
In the context of rootkit \Windows{IAT} hooking, the functions hooked are almost always operating system API functions and the malicious code pointed to by the overwritten \Windows{IAT} entry, in addition to its malicious behavior, almost always makes a call to the original API function in order to spoof its functionality\cite{hoglund2005rootkits,kim2012brief}.
Prior to or after the original API call, however, the malicious code causes the result of the library call to be changed or filtered.
By interposing the \Windows{FindFirstFile}  and \Windows{FindNextFile} \Name{Win32} API calls, for example, a rootkit can selectively filter files of specific unicode identifiers so that they will not be seen by the caller.
This particular exploit might involve calling \Windows{FindNextFile} multiple times within the malicious code to skip over malicious files and protect its stealth.

\Windows{IAT} hooking is nontrivial and has its limitations\cite{hoglund2005rootkits,leitch2011iat,microsoft2007what}, for example, it requires the PE header of the target binary to be parsed and the correct addresses of target functions to be identified. Practically, \Windows{IAT} hooking is restricted to OS API calls, unless specifically engineered for a particular non-API DLL\cite{hoglund2005rootkits,leitch2011iat}. An additional difficulty of \Windows{IAT} hooking is that DLLs can be loaded at different times with respect to the executable launch \cite{hoglund2005rootkits}. DLLs can be loaded either at load time or at runtime of the executable. In the latter case, the \Windows{IAT} does not contain function pointers until just before they are used, so hooking the \Windows{IAT} is considerably more difficult. Further, by loading DLLs with the \Name{Win32} API calls \Windows{LoadLibrary} and \Windows{GetProcAddress}, no entries will be created in the \Windows{IAT}, making the loaded DLLs impervious to \Windows{IAT} hooking\cite{hoglund2005rootkits,microsoft2007what}.

Inline function patching, a.k.a. \emph{detouring}, is another common second-generation technique, which avoids some of the shortcomings of \Windows{IAT} hooking \cite{hoglund2005rootkits}.
Unlike function pointer modification, detouring uses the direct modification of code in memory.
It involves overwriting a snippet of code with an unconditional jump to malicious code, saving the stub of code that was overwritten by the malicious code, executing the stub after the malicious code, and possibly jumping back to the point of departure from the original code so that the original code gets executed -- a technique known as \emph{trampolining} \cite{hoglund2005rootkits}.

In practice, overwriting generic code segments is difficult for several reasons. First, stub-saving without corrupting memory is inherently difficult\cite{szor2005theart}. Second, the most common instruction sets, including \Name{x86} and \Name{x86-64}, are variable-length instruction sets, meaning that disassembly is necessary to avoid splitting instructions in a generic solution \cite{x64_introduction}.
Not only is disassembly a high overhead task for stealth software, but even with an effective disassembly strategy, performing arbitrary jumps to malicious code can result in unexpected behavior that can compromise the stealth of the rootkit\cite{szor2005theart}.
Consider, for example, mistakenly placing a jump to shell code and back into a loop that executes for many iterations. One execution of the shell code might have negligible overhead, but a detour placed within an otherwise tight loop may have a noticeable effect on system behavior.

Almost all existing \Name{Windows} rootkits %in the wild\td{Change "rootkits in the wild" to "existing rootkits"}
that rely on inline function patching consequently hook in the first few bytes of the target function \cite{hoglund2005rootkits}. In addition to the fact that an immediate detour limits the potential for causing strange behaviors, many \Name{Windows} compilers for \Name{x86} leave 5 \Windows{NOP} bytes at the beginning of each function. These bytes can easily be replaced by a single byte jump opcode and a 32 bit address. This is not an oversight. Rather, like hooking in general, detours are not inherently malicious and have a legitimate application, namely \emph{hot patching}\cite{microsoft1999detours}. Hot patching is a technique, which uses detours to perform updates to binary in memory.
During hot patching, an updated copy of the function is placed elsewhere in memory and a jump instruction with the address of the updated copy as an argument is placed at the beginning of the original function. The purpose of hot patching is to increase availability without the need for program suspension or reboot \cite{hunt1999detours}. \Name{Microsoft Research} even produced a software package called \Name{Detours} specifically designed for hot patching \cite{hunt1999detours,microsoft1999detours}. In addition, detours are also used in anti-malware \cite{hoglund2005rootkits}. Like \Windows{IAT} hooks, detours are relatively easy to detect. However, the legitimate applications of detours are difficult to distinguish from rootkit uses of detours\cite{hunt1999detours,hoglund2005rootkits}.

Detours are also not limited in scope to user mode API functions. They can also be used to hook kernel functions\cite{hoglund2005rootkits}.
Regardless of the hooking strategy, when working in user mode, a rootkit must place malicious code into the address space of the target process. This is usually orchestrated through \emph{DLL injection}, i.e., by making a target process load a DLL into its address space.
%\emph{injection of a DLL into the target process' address space}\td{Be more explicit, what this means}.
%DLL injection techniques are likewise\td{likewise to what..} not inherently malicious. Consequently, DLL injection can simply be performed using \Name{Win32} API functionality. This makes DLL injections easy to detect. However, discerning benign DLL injections from malicious DLL injections is a more daunting task.
Having a process load a DLL is common, so common that DLL injection can simply be performed using \Name{Win32} API functionality. This makes DLL injections easy to detect. However, discerning benign DLL injections from malicious DLL injections is a more daunting task\cite{protean2013api, protean2013hookex, protean2013remotethread}.

Three of the most common DLL injection techniques are detailed in \cite{protean2013api, protean2013hookex, protean2013remotethread}. The simplest technique exploits the \Name{AppInit\_DLLs} registry key, which proceeds as follows: a DLL, containing a \Windows{DllMain} function is written, optionally with a payload to be executed. The DLL main  function takes three arguments: a DLL handle, the reason for the call (process attach/detach or thread attach/detach), and a third value which depends on whether the DLL is statically or dynamically loaded. By changing the value of the \Name{AppInit\_DLLs} registry key to include the path to the DLL to be executed, and changing the \Windows{LoadAppInit\_DLLs} registry key's value to 1, whenever any process loads \Windows{user32.dll}, the injected DLL will also be loaded and the \Windows{DllMain} functionality will be executed. Although the DLL gets injected only when a program loads \Windows{user32.dll}, \Windows{user32.dll} is prevalent in many applications, since it is responsible for key user interface functionality. Whether or not \Windows{DllMain} calls malicious functionality, the \Name{AppInit} technique can be used to inject a DLL into an arbitrary process' address space, as long as that process calls functionality from \Windows{user32.dll}. Note that, although the injection itself involves setting a registry key, which could indicate the presence of a rootkit, the rootkit can change the value of the registry key once resident in the target process' address space \cite{hoglund2005rootkits}.

\begin{figure*}[ht]
  \centering
  \includegraphics[width=\linewidth]{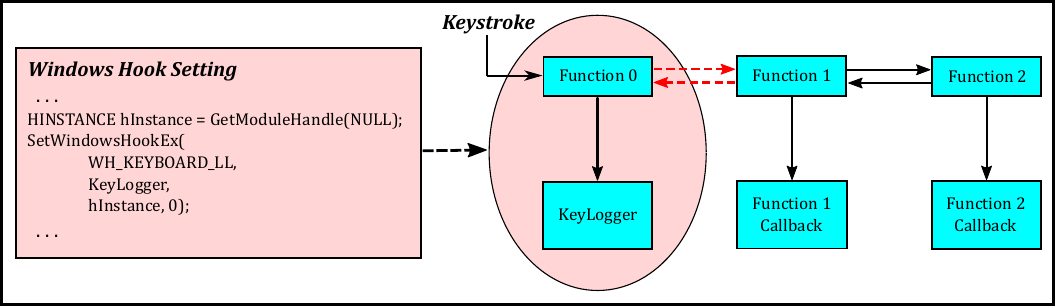}
  \cap{fig:keylogger}{Exploiting event hook chains}{A prototypical keylogger application gets the target process' context, injects a malicious DLL into its address space, and prepends a function pointer to code within this DLL to the keypress event hook chain. Whenever a key is pressed, the newly introduced callback is triggered, thereby allowing the malicious code to log every keystroke.}%\td{In the text it is written that new hooks are usually added at the begin of the hook chain, while in this figure, the malicious hook is added at the end of the tool chain. Do we need to change the figure accordingly?}
\end{figure*}

A second method of DLL injection exploits \Name{Windows} event hook chains \cite{protean2013hookex,msdnSet,msdnHooks}. Event hook chains are linked lists containing function pointers to application-defined callbacks. Different hook chains exist for different types of events, including key presses, mouse motions, mouse clicks, and messages\cite{msdnHooks}. Additional procedures can be inserted into hook chains using the \Windows{SetWindowsHookEx} \Name{Win32} API call.
By default, inserted hook procedures are placed at the front of a hook chain, e.g.,  prominent rootkits/bootkits ~\cite{Olmasco,TDL4-reboot} overwrite the pointer to the handlers in the \Windows{DRIVER\_OBJECT} structure, but some rootkits, e.g.,  \Name{Win32/Gapz}\cite{Gaptz},  use splicing, patching the handlers' code themselves.
Upon an event associated with a particular hook chain, the operating system sequentially calls functions within the hook chain. Each hook function determines whether it is designed to handle the event. If not, it calls the \Windows{CallNextHookEx} API function. This invokes the next procedure within the hook chain. There are two specific types of hook chains: global and thread specific. Global hooks monitor events for all threads within the calling thread's desktop, whereas thread-specific hooks monitor events for individual threads. Global hook procedures must reside in a DLL disjoint from the calling thread's code, whereas local hook procedures may reside within the code of the calling thread or within a DLL \cite{msdnHooks}.

For hook chain DLL injection a support program is required, as well as a DLL exporting the functionality to be hooked. The attack proceeds as follows\cite{protean2013hookex}: the support program gets a handle to the DLL and obtains the address of one of the exported functions through \Name{Win32} API calls. The support program then calls the \Windows{SetWindowsHookEx} API function passing it the action to be hooked and the address of the exported function from the DLL. \Windows{SetWindowsHookEx} places the hook routine into the hook chain of the victim process, so that the DLL functionality is invoked whenever a specified event is triggered. When the event first occurs, the OS injects the specified DLL into the process' address space, which automatically causes the \Windows{DllMain} function to be called. Subsequent events do not require the DLL to be reloaded since the DLL's exports get called from within the victim process' address space. An example keylogger rootkit is shown in Fig.~\ref{fig:keylogger}, which will log the pressed key and call \Windows{CallNextHookEx} to trigger the default handling of the keystroke. Again, benign addition of hook chain functions via \Windows{SetWindowsHookEx} is common, e.g., to decide which window/process should get the keystroke; the difficult task is determining if any of the added functions are malicious.

A third common DLL injection strategy involves creating a remote thread inside the virtual address space of a target process using the \Windows{CreateRemoteThread} Win32 API call \cite{protean2013remotethread}.
The injection proceeds as follows\cite{protean2013remotethread}: a support program controlled by the malware calls \Windows{OpenProcess}, which returns a handle to the target process.
The support program then calls \Windows{GetProcAddress} for the API function \Windows{LoadLibrary}. \Windows{LoadLibrary} will be accessible from the target process because this API function is part of \Windows{kernel32.dll}, a user space DLL from which every \Name{Win32} user space process imports functionality.
%At this stage the malware cannot load the rootkit DLL because the name of the function exported from this DLL must be resident in the target process' address space.
To insert the exported function name into the target process' address space, the malicious process must call the \Windows{VirtualAllocEx} API function.
This API function allocates a virtual memory range within the target process' address space.
The allocation is required in order to store the name of the rootkit DLL function. The \Windows{WriteProcessMemory} API call is then used to place the name of the malicious DLL into the target process' address space.
Finally, the \Windows{CreateRemoteThread} API function calls the \Windows{LoadLibrary} function to inject the rootkit DLL.
Like event chains, the \Windows{CreateRemoteThread} API call has legitimate uses.
For example, a debugger might fire off a remote thread in a target process' address space for profiling and state inspection.
An anti-malware module might perform similar behavior.
Finally, IO might be handled through pointers to callbacks exchanged by several processes, where the callback method is intended to execute in another process' address space.
The fact that the API call has so many potentially legitimate uses makes malicious exploits particularly difficult to detect.

\subsubsection{Kernel-Mode Hooking}
\sad{Injection of code into the Kernel via device drivers.}{Difficult to detect by user-mode IDSs.}{Intricate to implement correctly.}

Rootkits implementing kernel  hooks are more difficult to detect than those implementing user space hooks. In addition to the extended functionality afforded to the rootkit, user space anti-malwares cannot detect kernel hooks because they do not have the requisite permissions to access kernel memory\cite{hoglund2005rootkits,szor2005theart}. Kernel memory resides in the top part of a process' address space. For 32-bit \Name{Windows}, this usually corresponds to addresses above \Name{0x80000000}, but can correspond to addresses above \Name{0xC0000000}, if processes are configured for 3GB rather than 2GB of user space memory allocation. All kernel memory across all processes maps to the same physical location, and without special permissions, processes cannot directly access this memory.

Kernel hooks are most commonly implemented as device drivers\cite{hoglund2005rootkits}. Popular places to hook into the kernel include the \Windows{System Service Descriptor Table} (\Windows{SSDT}), the \Windows{Interrupt Descriptor Table} (\Windows{IDT}), and \Windows{I/O Request Packet} (\Windows{IRP}) function tables of device drivers\cite{hoglund2005rootkits}. The \Windows{SSDT} was the  hooking mechanism used in  the classic \Name{Sony DRM Rootkit} \cite{sonydrm} and the more recent \Name{Necurs} malware \cite{necurs}, and is often a used as part of more complex multi-exploitation kits such as the \Name{RTF} zero-day (\Name{CVE-2014-1761}) attack\cite{rtf-necurs} which was detected in the wild.

The \Windows{SSDT} is a \Name{Windows} kernel memory data structure of function pointers to system calls. Upon a system call, the operating system indexes into the table by the function call ID number, left-shifted 2 bits to find the appropriate function in memory. The \Windows{System Service Parameter Table} (\Windows{SSPT}) stores the number of bytes that arguments require for each system call. Since \Windows{SSPT} entries are one byte each, system calls can take up to 255 arguments.  The \Windows{KeServiceDescriptorTable} contains pointers to both the \Windows{SSDT} and the \Windows{SSPT}.

When a user space program performs a system call, it invokes functionality within \Windows{ntdll.dll}, which is the main interface library between user space and kernel space. The \Name{EAX} register is filled with the system function call ID and the \Name{EDX} register is filled with the memory address of the arguments. After performing a range check, the value of \Name{EAX} is used by the OS to index into the \Windows{SSDT}. The program counter register \Name{IP} is then filled with the appropriate address from the \Windows{SSDT}, executing the dispatcher call. Dispatches are triggered by the \Windows{SYSENTER} processor instruction or the more dated \Windows{INT 2E} interrupt.

\Windows{SSDT} hooks are particularly dangerous because they can supplement any system call with their own functionality.
Hoglund and Butler \cite{hoglund2005rootkits} provide an example of process hiding via \Windows{SSDT} hook, in which the \Windows{NTQuerySystemInformation} \Name{NTOS} system call is hooked to point to shell code, which filters \Windows{ZwQuerySystemInformation} structures  corresponding to processes by their unicode string identifiers. Selective processes can be hidden by changing pointers in this data structure. \Name{Windows} provides some protection to prevent \Windows{SSDT} hooks by making \Windows{SSDT} memory read-only. Although this protection makes the attacker's job more difficult, there are ways around it. One method is to change the memory descriptor list (MDL)\cite{ssdt_hooking,hoglund2005rootkits} for the requisite area in memory. This involves casting the \Windows{KeServiceDescriptorTable} to the appropriate data structure, and using it to build an MDL from non-paged memory. By locking the memory pages and changing the flags on the MDL one can change the permissions on memory pages. Another method of disabling memory protections is by zeroing the write protection bit in control register \Name{CR0}.

The \Windows{Interrupt Descriptor Table} (\Windows{IDT}) is another popular hook target\cite{idt_hooking}. The interrupt table contains pointers to callbacks that occur upon an interrupt.
Interrupts can be triggered by both hardware and software.
Because interrupts have no return values, \Windows{IDT} hooks are limited in functionality to denying interrupt requests.
They cannot perform data filtration. Multiprocessing systems have made \Windows{IDT} hooking more difficult\cite{hoglund2005rootkits}. Since each CPU has its own \Windows{IDT}, an attacker must usually hook \Windows{IDT}s of all CPUs to be successful.
Hooking only one of multiple \Windows{IDT}s causes an attack to have only limited impact.

A final popular kernel hook target discussed by Hoglund and Butler \cite{hoglund2005rootkits} is the \Windows{IRP} dispatch table of a device driver. Since many devices access kernel memory directly, \Name{Windows} abstracts devices as device objects. Device objects may represent either physical hardware devices such as buses or they may represent software ``devices'' such as network protocol stacks or updates to the \Name{Windows} kernel. Device objects may even correspond to anti-virus components that monitor the kernel. Each device object has at least one device driver. Communication between kernel and device driver is performed via the I/O manager. The I/O manager calls the driver, passing pointers to the device object and the I/O request. The I/O request is passed in a standardized data structure called an \Windows{I/O Request Packet} (\Windows{IRP}).
Within the device object is a pointer to the driver object. Drivers themselves are nothing more than special types of DLLs. The I/O manager passes the device object and the \Windows{IRP} to the driver. How the driver behaves depends on the contents and flags of the \Windows{IRP}. The function pointers for various \Windows{IRP} options are stored in a dispatch table within the driver. A rootkit can subvert the kernel by changing these function pointers to point to shell code \cite{hoglund2005rootkits}. An anti-virus implemented as a filter driver, for example, may be subverted by rewriting its dispatch table. Hoglund and Butler \cite{hoglund2005rootkits} provide an in-depth example of using driver function table hooking to hide TCP connections. Essentially any kernel service that uses a device driver can be subverted by hooking the \Windows{IRP} dispatch table in a similar manner.

Note that while hooking device driver dispatch tables sounds relatively simple, the technique requires sophistication to be implemented correctly \cite{hoglund2005rootkits}. First, implementing bug-free kernel driver code is an involved task to begin with. Since drivers share the same memory address space as the kernel, a small implementation error can corrupt kernel memory and result in a kernel crash. This is one of the reasons that \Name{Microsoft} has gravitated to user-mode drivers when possible \cite{tanenbaum2007modern}. Second, in many applications drivers are stacked. When dealing with physical devices, the lowest level driver on the stack serves the purpose of abstracting bus-specific behavior to an intermediate interface for the upper level driver. Even in software, drivers may be stacked, for example, anti-virus I/O filtering or file system encryption/decryption can be performed by a filter driver residing in the mid-level of the stack\cite{hoglund2005rootkits}. A successful rootkit author must therefore understand how the device stack behaves, where in the device stack to hook, and how to perform I/O completion such that the hook does not result in noticeably different behavior.  

\subsubsection{Hybrid Hooking}
\label{sec:hybrid}
\sad{Hook user functions into kernel DLLs.}{Even more difficult to detect than kernel hooking.}{More difficult to implement than kernel hooking.}
Hybrid hooks aim to circumvent anti-malwares by attacking user space and kernel space simultaneously. They involve implementing a user space hook to kernel memory. Hoglund and Butler \cite{hoglund2005rootkits} discuss a technique to hook the user space \Windows{IAT} of a process from the kernel. The motivation behind this technique is based on the observation that user space \Windows{IAT} hooks are detectable because one needs to allocate memory within the process' context or inject a DLL for the same effect. But is there some means to hook the \Windows{IAT} through the kernel, without the need to allocate user space memory for \Windows{IAT} hooks? The answer is yes: the attack in \cite{hoglund2005rootkits} leverages two aspects of the Windows architecture. First, it uses the \Windows{PsSetLoadImageNotifyRoutine}, a kernel mode support routine that registers driver callback functions to be called whenever a PE image gets loaded into memory \cite{msdnPs}. The callback is called within the target process' context after loading but before execution of the PE. By parsing the PE image in memory, an attacker can change the \Windows{IAT}. The question then becomes, how to run malicious code without overt memory allocation or DLL injection into the process' address space? One solution uses a specific page of memory \cite{jack2005step}: in \Name{Windows} there exists a physical memory address shared by both kernel space and user space, which the kernel can write to, but the user cannot. The user and kernel mode addresses are \Name{0x7FFE0000} and \Name{0xFFDF0000}, respectively. The reason for this virtual $\leftrightarrow$ physical mapping convention stems from the introduction of \Windows{SYSENTER} and \Windows{SYSEXIT} processor instructions, for fast switches between user mode and kernel mode. Approximately 1kB of this page is used by the kernel \cite{hoglund2005rootkits}, but the remaining 3kB are blank. Writing malicious code to addresses in the page starting at \Name{0xFFDF0000} in kernel space and placing a function pointer to the beginning of the code at the corresponding address in user space allows the rootkit to hook the \Windows{IAT} without allocating memory or performing DLL injection.

Another hybrid attack is discussed in \cite{srivastava2011operating}. The attack is called \Name{Illusion}, and involves both kernel space and user space components. The motivation behind the attack is to circumvent intrusion detection systems that rely on system call analysis (cf. Sec.~\ref{sec:countermeasure_survey}).
To understand the \Name{Illusion} attack, we review steps involved in performing a system call:
first, a user space application issues an \Windows{INT 3} interrupt or a \Windows{SYSENTER} processor instruction, which causes the processor to switch to kernel mode and execute the dispatcher. The dispatcher indexes into the \Windows{SSDT} to find the handler for the system call in question. The handler performs its functionality and returns to the dispatcher. The dispatcher then passes return values to the user space application and returns the processor to user space.
These steps should be familiar from the prior discussion of hooking the \Windows{SSDT}.
\Name{Illusion} works by creating a one-to-all mapping of potential execution paths between system calls, which take array buffer arguments and the function pointers of the \Windows{SSDT}. Although the same effect could be obtained by making changes directly to the \Windows{SSDT}, the \Name{Illusion} approach, unlike \Windows{SSDT} hooking, cannot be detected using the techniques discussed in Sec.~\ref{sec:countermeasure_survey}. \Name{Illusion} exploits system calls such as \Windows{DeviceIoControl}, which is used to exchange data buffers between application and kernel driver. Parts of the rootkit reside in both in kernel space and in user space. Messages are passed between user space rootkit and kernel space rootkit by filling the buffer. Communication is managed via a dedicated protocol.
This allows the user space rootkit to make system calls on its behalf without changing the \Windows{SSDT}. Further, metamorphic code as described in Sec.~\ref{sec:code_mutation} can be leveraged to change the communication protocol at each execution.

\subsection{Type 3 Rootkits: Direct Kernel Object Manipulation}
\sad{Modify dynamic kernel data structures.}{Extremely difficult to detect.}{Has limited applications.}

\begin{figure}[t]
  \centering
  \includegraphics[width=.48\textwidth]{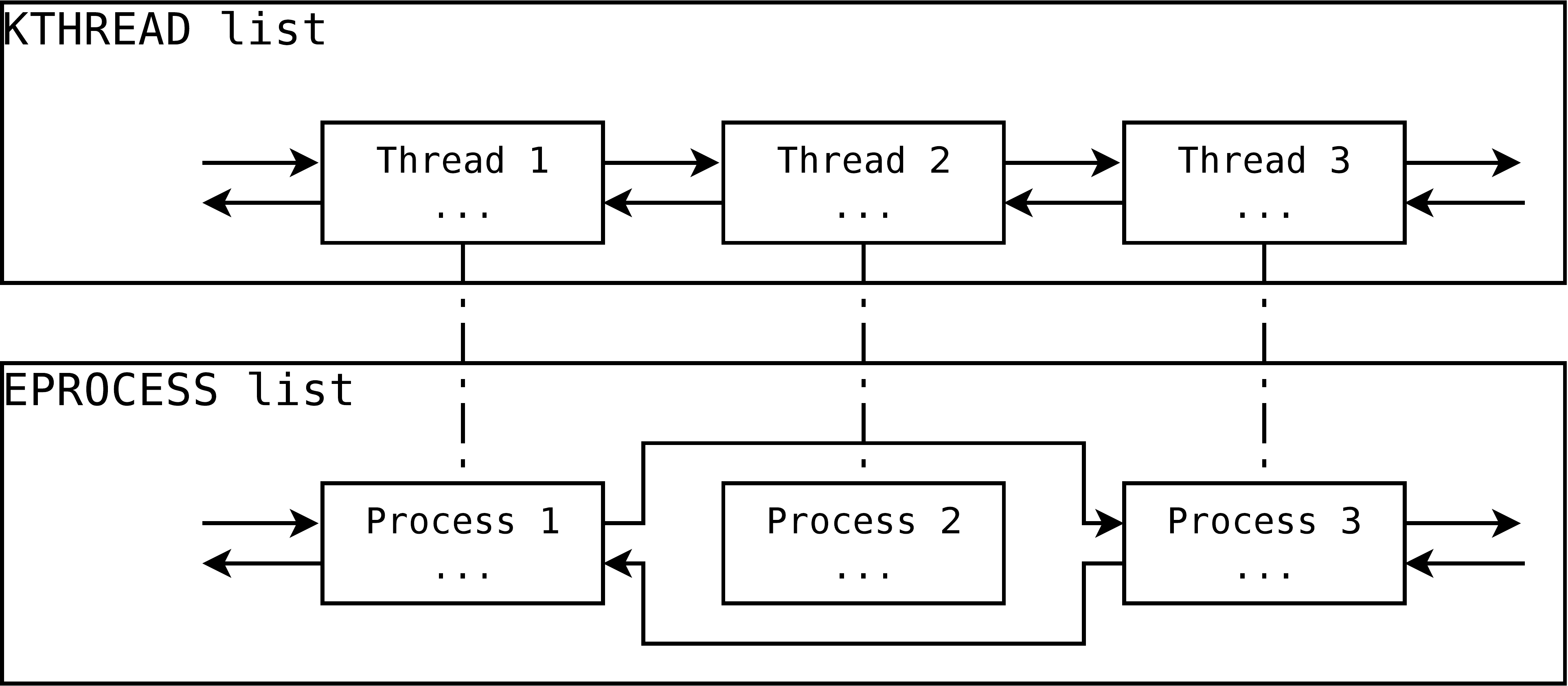}
  \cap{fig:DKOM}{DKOM attack}{This figure displays a successful DKOM attack, where the malicious code of \texttt{Process 2} is hidden from the system yet continues to execute.}
\end{figure}

Although second-generation rootkits remain ubiquitous, they are not without their limitations.
Their change of overt function behavior inherently leaves a detectable footprint because it introduces malicious code -- either in user space or in kernel space -- which can be detected and analyzed\cite{butler2005windows1}.
Third-generation \emph{direct kernel object manipulation} (DKOM) attacks take a different approach.
DKOM aims to subvert the integrity of the kernel by targeting dynamic kernel data structures responsible for bookkeeping operations \cite{butler2005windows1}.
Like kernel space hooks, DKOM attacks are immune to user space anti-malware, which assumes a trusted kernel.
DKOM attacks are also much harder to detect than kernel hooks because they target dynamic data structures whose values change during normal runtime operation. By contrast, hooking static areas of the kernel like the \Windows{SSDT} can be detected with relative ease because these areas should remain constant during normal operation\cite{butler2005windows1}.

The canonical example of DKOM is process hiding. The attack can be carried out on most operating systems, and relies on the fact that schedulers use different data structures to track processes than the data structures used for resource bookkeeping operations\cite{beck2005detecting}. In the \Name{Windows NTOS} kernel, for example, the kernel layer\footnote{The kernel itself has three layers, one of which is called the \textit{kernel layer}. The other two layers of the kernel are the \textit{executive layer} and the \textit{hardware abstraction layer}.} is responsible for managing thread scheduling, whereas the executive layer, which contains the memory manager, the object manager, and the I/O manager is responsible for resource management \cite{tanenbaum2007modern}. Since the executive layer allocates resources (e.g., memory) on a per-process basis, it views processes as \Windows{EPROCESS} (executive process) data structures, maintained in double circularly linked lists. The scheduler, however, operates on a per-thread instance, and consequently maintains threads in its own double circularly linked list of \Windows{KTHREAD} (kernel thread) %\td{Check if we can use ETHREAD here instead, since we have documentation of ETHREAD, but not for KTHREAD} 
data structures. By modifying pointers, a rootkit with control over kernel memory can decouple an \Windows{EPROCESS} node from the linked list, re-coupling the next and previous \Windows{EPROCESS} structures' pointers. Consequently, the process will no longer be visible to the executive layer and calls by the \Name{Win32} API will, therefore, not display the process. However, the thread scheduler will continue CPU quantum allocation to the threads corresponding to the hidden \Windows{EPROCESS} node. The process will, thus, be effectively invisible to both user and kernel mode programs -- yet it will still continue to run. This attack is depicted in Fig.~\ref{fig:DKOM}.

While process hiding is the canonical DKOM example, it is just one of several DKOM attack possibilities. Baliga et al.~\cite{baliga2011data} discuss several known DKOM attack variants, including zeroing entropy pools for pseudorandom number generator seeds, disabling pseudorandom number generators, resource waste and intrinsic DOS, adding new binary formats, disabling firewalls, and spoofing in-memory signature scans by providing a false view of memory.

Proper DKOM implementations are extremely difficult to detect. Fortunately, DKOM is not without its shortcomings and difficulties from the rootkit developer's perspective. Changing OS kernel data structures is no easy task and incorrect implementations can easily result in kernel crashes, thereby causing an overt indication of a malware's presence. Also, DKOM introduces no new code to the kernel apart from the code to modify kernel data structures to begin with. Therefore, inherent limitations on the scope of a DKOM attack are imposed by the manner in which the kernel uses its data structures. For example, one usually cannot hide disk resident files via DKOM because most modern operating systems
do not have kernel level data structures corresponding to lists of files.

\subsection{Type 4 Rootkits: Cross Platform Rootkits and Rootkits in Hardware}
\sad{Attack systems using very low-level rootkits.}{Undetectable by conventional software countermeasures.}{Requires custom low-level hypervisor, BIOS, hardware or physical/supply chain compromise to be effective.}
Fourth-generation rootkit technologies operate at the virtualization layer, in the BIOS, and in hardware \cite{seifried2008fourth}. To our knowledge, fourth-generation rootkits have been developed only in proof-of-concept settings, as we could not find any documentation of fourth-generation rootkits in the wild.
Because they reside at a lower level than the operating system, they cannot be detected through the operating system and are, therefore, OS independent.
However, they still dependent on the type of BIOS version, instruction set, and hardware\cite{seifried2008fourth}.
Since fourth-generation rootkits are theoretical in nature -- at least as of now -- we consider them outside the scope of this survey. We mention them in this section for completeness and because they may become relevant after the publication of this survey.

\subsection{Code Mutation}
\label{sec:code_mutation}
\sad{Self-modifying malicious code.}{Avoids simple signature matching.}{Greater runtime overhead and detectable via emulation.}

\begin{figure*}[!htbp]
  \centering
  \subfloat[\label{fig:metamorphic1}]{\includegraphics[width=0.48\textwidth]{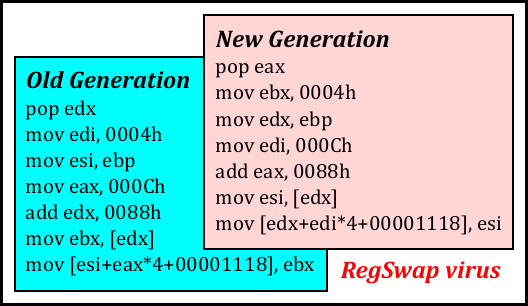}}\hspace*{.02\textwidth}
  \subfloat[\label{fig:metamorphic2}]{\includegraphics[width=0.48\textwidth]{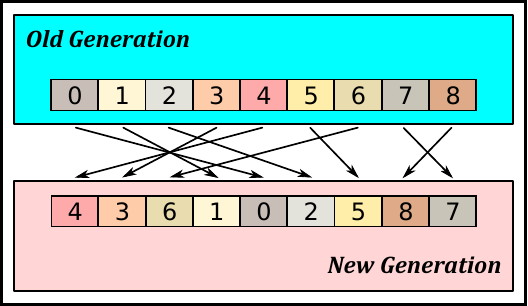}}\\
  \subfloat[\label{fig:metamorphic3}]{\includegraphics[width=0.32\textwidth]{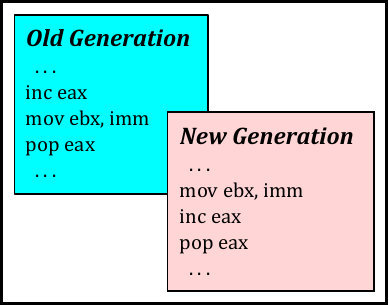}}\hspace*{.01\textwidth}
  \subfloat[\label{fig:metamorphic4}]{\includegraphics[width=0.32\textwidth]{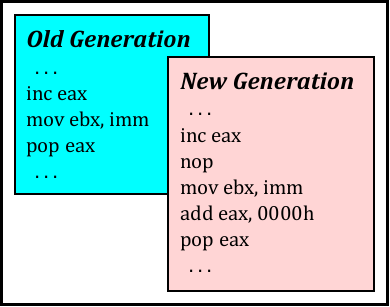}}\hspace*{.01\textwidth}
  \subfloat[\label{fig:metamorphic5}]{\includegraphics[width=0.32\textwidth]{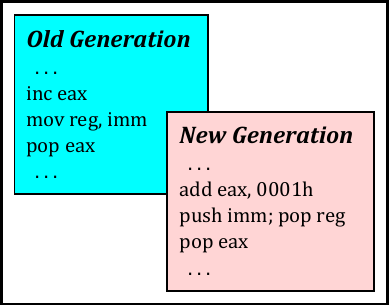}}
  \cap{fig:metamorphic}{Metamorphic code obfuscation}{Five techniques employed by metamorphic engines to evade signature scans across malware generations. \protect\subref*{fig:metamorphic1} Register swap: exchanging registers as demonstrated by code fragments from the \Name{RegSwap} virus \cite{szor2001hunting}. \protect\subref*{fig:metamorphic2} Subroutine permutation: reordering subroutines of the virus code. \protect\subref*{fig:metamorphic3} Transposition: modifying the execution order of independent instructions. \protect\subref*{fig:metamorphic4} Semantic NOP-insertion: injecting NOPs or instructions that are semantically identical to NOPs. \protect\subref*{fig:metamorphic5} Code mutation: replacing instructions with semantically equivalent code.
}
\end{figure*}

In early viruses, the viral code was often appended to the end of an executable file, with the entry point changed to jump to the viral code before running the original executable\cite{szor2005theart}. Once executed, the virus code in turn would jump to the beginning of the body of the executable so that the executable was run post-replication. The user would be none the wiser until the host system had been thoroughly infected. Anti-malware companies soon got wise and started checking hashes of code blocks -- generally at the end of files. To counter, malware authors began to encrypt the text of the viruses. This required a decryption routine to be called at the beginning of execution. The virus was then re-encrypted with a different key upon each replication\cite{szor2005theart}.

These encrypted viruses had a fatal flaw: the decryption routine was jumped to somewhere in the executable. Anti-malware solutions merely had to look for the decrypter. Thus, \emph{polymorphic} engines were created, in which the decryption engine mutated itself at each generation, no longer matching a fixed signature.
However, polymorphic viruses were still susceptible to detection\cite{szor2005theart}: although the detector mutated, the size of the malicious code did not change, was still placed at the end of the file, and was susceptible to entropy analysis, depending on the encryption technique.

To this end, entry-point obscuring (EPO) viruses were created, where the body of the viral code is placed arbitrarily in the executable, sometimes in a distributed fashion\cite{szor2005theart}. To counter the threats from polymorphic viruses, Kaspersky (of \Name{Kaspersky Lab} fame) and others \cite{beaucamps2007advanced} created emulation engines, which run potentially malicious code in a virtual machine. In order to run, the body of the viral code must decrypt itself in memory in some form or another, and when it does, the body of the malicious code is laid bare for hashed signature comparison as well as behavioral/heuristic analysis.

To combat emulation, \emph{metamorphic} engines were developed.
Just as polymorphic malwares mutate their decryption engines at each generation, metamorphic engines mutate the full body of their code and, unlike polymorphics, change the size of the code body from one generation to another\cite{szor2005theart}. Some malwares still encrypt metamorphic code, or parts of metamorphic code, while others do not -- as encryption and run time packing techniques can reveal the existence of malicious code\cite{szor2005theart}.

Metamorphic code mutation techniques, as shown in Fig.~\ref{fig:metamorphic}, include register swaps, subroutine permutations, transpositions of independent instructions, insertion of \Name{NOP}s or instruction sequences that behave as \Name{NOP}s, and parser-like mutations by context-free grammars (or other grammar types) \cite{sridhara2013metamorphic,filiol2007metamorphism,zbitskiy2009code,beaucamps2007advanced}. Many metamorphic techniques are similar to compilation techniques, but for a much different purpose. The metamorphic engine in the \Name{MetaPHOR} worm, for example, disassembles executable code into its own intermediate representation and uses its own formal grammar to carry out this mutation \cite{beaucamps2007advanced}.

Code transformation techniques are not particular to native code either: Faruki et al.~\cite{faruki2014evaluation} presented several \Name{Dalvik} bytecode obfuscation techniques and tested them against several \Name{Android} security suites, which often failed at recognizing transformed malicious code. While some of the transformation targets are unique to obfuscation on \Name{Android} devices -- e.g., renaming packages and encrypting resource files -- the control, data, and layout transformations in \cite{faruki2014evaluation} follow the same principles of code obfuscation at the native level.

\subsection{Anti-Emulation}
\label{sec:antiemulation}
\sad{Malware behaves differently when running in an emulated environment.}{Malware evades detection during emulation.}{Needs to detect the presence of the emulator reliably. May not run in certain virtualized environments.}

Mutation engines, including metamorphics and polymorphics, change the instructions in the target code itself and naturally its runtime\cite{szor2001hunting}.
However, they do not change the underlying functionality.
Therefore, during emulation (cf.~Sec.~\ref{sec:virtualization}), behavioral and heuristic techniques can be used to fingerprint malicious code, for example, if the malware conducts a strange series of system calls, or if it attempts to establish a connection with a C2 server at a known malicious address.
Hence, malware can be spotted regardless of the degree of obfuscation present in the code \cite{szor2005theart}.

The success of early emulation techniques led to the usage of malicious anti-emulation tactics, which include attempts to detect the emulator by examining machine configurations -- e.g., volume identifiers and network interface -- and use of difficult to emulate functionality, e.g., invoking the GPU \cite{faruki2016droidanalyst,szor2005theart}.
In turn, emulation strategies have become more advanced, for example, in their \Name{DroidAnalyst} framework \cite{faruki2016droidanalyst} for \Name{Android}, Faruki et al. implement a realistic emulation platform by overloading default serial numbers, phone numbers, geolocations, system time, email accounts, and multimedia files to make their emulator more difficult to detect.

A realistic emulation environment is a good start to avoid emulator detection based on hardware characteristics, but it alone is insufficient to defeat all types of anti-emulation, for example, \Name{Duqu} only executes certain components after 10 minutes idle when certain requirements are met \cite{bencsath2012cousins}. Similarly the \Name{Kelihos} botnet \cite{kelihos}, and the \Name{Nap} Trojan \cite{nap} use the \Windows{SleepEx} and \Windows{NtDelayExecution} API calls to delay malicious execution until longer times than a typical emulator will devote to analysis.
\Name{PoisonIvy} \cite{poisonivy} and similarly \Name{UpClicker} \cite{upclicker} establish malicious connections only when the left mouse button is released.
\Name{PushDo} takes a more offensive approach, using \Windows{PspCreateProcessNotify} to de-register sandbox monitoring routines\cite{singh2013hot}.
Other malwares take advantage of dialog boxes and scrolling\cite{singh2013hot}.
Even mouse movements are taken into consideration and malware can differentiate between human and simulated mouse movements by assessing speed, curvature, and other features\cite{singh2013hot}.
Thus, emulated environments for stealth malware detection face the tradeoff between realistic emulation and implementation cost.
Anti-emulation in turn faces a different problem: with the explosion of virtualization technology, thanks largely to the heavy drive toward cloud computing, virtualized (emulated) environments are seeing increased general-purpose use.
This draws into question the effectiveness of anti-emulation as a stealth technique: if malicious code will not run in a virtual environment, then it might not be an effective attack if the targeted machine is virtualized.

\subsection{Targeting Mechanisms}
\sad{Malware runs on or spreads to only chosen systems.}{Decreases risk of detection.}{Malware spreads at a lower rate. Motivation for the attack is given away if detected.}

Stealth targeted attacks -- which aim to compromise specific targets -- are becoming more advanced and more widespread\cite{istr2016symantec}.
While targeting mechanisms are not necessarily designed for stealth purposes, they have the effect of prolonging the amount of time that malware can remain undetected in the wild. This is done by allowing the malware to spread/execute only on certain high-value systems, thus minimizing the likelihood of detection while maximizing the impact of the attack. For example, recent point of sale (POS) compromises \cite{pos} targeted only specific corporations. The \Name{DarkHotel} \cite{darkhotel} \textit{advanced persistent threat} (APT) targets only certain individuals (e.g., business executives). The notorious \Name{Stuxnet} worm and its relatives \Name{Duqu}, \Name{Flame}, and \Name{Gauss} employed sophisticated targeting mechanisms \cite{bencsath2012cousins}, preventing the malwares from executing on un-targeted systems. 
\Name{Stuxnet} checks system configuration prior to execution; malicious components simply will not execute if the detected environment is not correct rather attempting to execute and failing\cite{falliere2011w32}.
\Name{Gauss}'s \Name{G\"odel} module is even encrypted with an RC4 cipher, with a key derived from system-specific data; thus, much of the functionality of the malware remains unknown, since a large part of the body of the code can only be decrypted with knowledge of the targeted machines\cite{gauss_abnormal}. Hence, IDS developers and anti-malware researchers cannot get the malicious code running in un-targeted machines.
Targeting mechanisms may also change the behavior of the malware depending on the configuration of the machine so as to evade detection. For example, \Name{Flame} dynamically changes file extensions depending on the type of anti-malware that it detects on the machine\cite{bencsath2012skywiper}. Other malwares may simply not run or choose to uninstall themselves to evade detection, while others will execute only under certain conditions on time, date, and geolocation\cite{szor2005theart,bencsath2012cousins}.

\section{Component-Based Stealth Malware Countermeasures}
\label{sec:countermeasure_survey}

In this section, we discuss anti-stealth malware techniques that aim to protect the integrity of areas of systems, which are known to be vulnerable to attacks.
These techniques include hook detection, cross-view detection, invariant specification, and hardware and virtualization solutions.

When assessing the effectiveness of any malware recognition system, it is important to consider the system's respective precision/recall tradeoff.
Recall refers to the proportion of malicious samples of a given type that were correctly detected as malicious samples of that type, while precision refers to the proportion of the samples that the system marked as a malicious type that are actually of that malicious type. %retrieved samples that are relevant. \td{Explain this. What is a sample, what means retrieval, and what means relevant}
Increased recall tends to decrease precision, whereas increased precision tends to decrease recall.
The ``optimal'' tradeoff between precision and recall for a given system depends on the application at hand.
The integrity based solutions discussed in this section tend to offer higher precision rates than the pattern recognition techniques discussed in Sec.~\ref{sec:signatures_heuristics}, but they are difficult to update because custom changes to hardware and software are required, making scalability an issue.

It is important to realize that the \emph{component protection} techniques presented in this section are in practice often combined with more generic pattern recognition techniques discussed in Sec.~\ref{sec:signatures_heuristics}\cite{szor2005theart,sommer2010outside,singh2013hot}, for example, hardware and virtualization solutions might be used to achieve a clean view of memory, on which a signature scan can be run \cite{petroni2004copilot,garfinkel2003virtual}.

%In this section, we discuss stealth malware detection techniques. For rootkits, a concise, but dated survey can be found in \cite{RootkitsOf2005}. In this section, we discuss these techniques in more depth, survey more contemporary techniques, and include discussion of

%Anti-stealth malware techniques can be roughly separated into two categories: those that aim to protect specific components of a system and those that aim to achieve generic recognition of heterogeneous malwares. The former aim to protect the integrity of areas of systems which are known to be vulnerable to attack. These techniques include include hook detection, cross-view detection, invariant specification, and hardware and virtualization solutions. These integrity based solutions tend to offer higher precision rates, but they are difficult to update because custom changes to hardware and software are required. They therefore do not scale well. The latter are more general, tend to exhibit higher recall rates but lower precision rates. They include signature, heuristic, and behavioral methods discussed in Sec.~\ref{sec:signatures_heuristics}. Machine learning plays a prominent role in the application of these methods. It is important to realize that the former \emph{component protection} and the latter \emph{generic recognition} techniques are often combined in practice, for example, hardware and virtualization solutions might be used to achieve a clean view of memory on which to run a signature scan.

\subsection{Detecting Hooks}
\sad{Detect malwares that use hooking.}{Easy to implement.}{High false positive rates from legitimate benign hooks.}

If a stealth malware uses in-memory hooks as described in Sec.~\ref{sec:hooking}, IDSs can detect the malware by detecting its hooks.
Unfortunately, methods that simply detect hooks trigger high false alarm rates since hooks are not inherently malicious.
This makes weeding out false positives a challenging task.
Also, since DKOM is not a form of hooking, hook detection techniques cannot detect DKOM attacks.

Ironically, an effective approach to detect hooks is to hook common attack points. By doing so, an anti-malware may not only be able to detect a rootkit loading into memory, but may also be able to preempt the attack. This might be accomplished by hooking the API functions used to inject DLLs into a target process' context (cf. Sec.~\ref{sec:userhooking}) \cite{hoglund2005rootkits}. However, one must know what functions to hook and where to look for malicious attacks. Pinpointing attack vectors is not easy. For example, symbolic links are often not resolved to a common name until system call hooks have been executed\cite{hoglund2005rootkits}. Therefore, if the anti-malware relies on hooking the \Windows{SSDT} alone and matching the name of the target in the hook routine, an attacker can simply use an alias. Once hooks are observed, some tradeoff between precision and recall must be made: One can easily catch all rootkits loading into memory, and in doing so, create a completely unusable system (i.e., very high recall rates but extremely low precision rates).

Hook detection can be combined with signature and heuristic scans (discussed in Sec.~\ref{sec:signatures_heuristics}) for ingress point monitoring. Based on a signature of the hooked code, the ingress point monitoring system can determine whether or not to raise an alarm. In contrast to trying to detect rootkit hooks as a rootkit loads, \Name{VICE} \cite{hoglund2005rootkits,butler2004vice} uses memory scanning techniques that periodically inspect likely target locations of hooks such as the \Windows{IAT}, \Windows{SSDT}, or \Windows{IDT}. \Name{VICE} detects hooks based on the presence of unconditional jumps to memory values outside of acceptable address ranges. Acceptable ranges can be defined by \Windows{IAT} module ranges, driver address ranges, and kernel process address ranges. For example, a system call in the \Windows{SSDT} should not point to an address outside \Windows{ntoskrnl.exe}.

Generic inline hooks cannot feasibly be detected via this method. Fortunately, as we discussed in Sec.~\ref{sec:hooking}, hooks beyond the first few bytes of a function are rare, since they can result in strange behaviors, including noticeable slow down and outright program failure. For \Windows{SSDT} functions, unconditional jumps within the first few bytes outside of \Windows{ntoskrnl.exe} are indicators of hooks. \Windows{IAT} range checks require context switching into the process in question, enumerating the address ranges of the loaded DLLs, checking whether the function pointers in the \Windows{IAT} fall outside of their corresponding ranges, and recursively repeating this for all loaded DLLs.

A similar approach to \Name{VICE} was taken in the implementation of \Name{System Virginity Verifier} \cite{rutkowska2005system}, which attempts to separate malicious hooking from benign hooking by comparing the in-memory code sections of drivers and DLLs to their disk images. Since these sections are supposed to be read-only, they should match in most cases, with the exception of a few lines of self-modifying kernel code in the \Name{NTOS} kernel and hardware abstraction layers. Malicious hooks distinguish themselves from benign hooks when they exhibit discrepancies between in-memory and on-disk PE images, which will not occur under benign hooking \cite{rutkowska2005system}. Additionally, if the disk image is hidden then the hook likely corresponds to a rootkit.
One must be careful in this case to distinguish missing files, which can occur in legitimate hooking applications, from hidden files.
Other examples of image discrepancies associated with malicious hooks include failure of module attribution and code obfuscation.

An indirect approach to detecting hooks was implemented in \Name{Patchfinder 2} \cite{rutkowska2004detecting}, in the form of API call tracing.
This approach counts the number of instructions executed during an API call and compares the count to the number of instructions executed in a clean state. The intuition is based on the observation that in the context of rootkits, hooks almost always add instructions\cite{rutkowska2004detecting}. The technique requires proper baselining, which presents two challenges: first, deducing that the system is in a non-hooked state to begin with is difficult to establish, unless the system is fresh out of the box. Second, the \Name{Win32} API has many functions, which take many different arguments. Since enumerating all argument combination possibilities while acquiring the baseline is infeasible, API calls can vary substantially in instruction count even when unhooked.

\subsection{Cross-View Detection and Specification Based Methods}
\sad{Compare the output of API calls with that of low-level calls that are designed to do the same thing.}{Detects malware that hijacks API calls.}{Requires meticulous low-level code for to replicate functionality of most of the system API.}

Cross-view detection is a technique aimed to reveal the presence of rootkits. The idea behind cross-view detection \cite{rutkowska2005thoughts} is to observe the same aspect of a system in multiple ways, analogous to interviewing witnesses at a crime scene: just as conflicting stories from multiple witnesses likely indicate the presence of a liar, if different observations of a system return different results, the presence of a rootkit is likely. First, OS objects -- processes, files, etc. -- are enumerated via system API calls. This count is compared to that obtained using a different approach not reliant on the system API. For example, when traversing the file system, if the results returned by \Windows{FindFirstFile} and \Windows{FindNextFile} are inconsistent with direct queries to the disk controller, then a rootkit that hides files from the system is likely present.

One of the advantages of cross-view detection is that -- if implemented correctly -- maliciously hooked API calls can be detected with very few false positives because legitimate applications of API hooking rarely change the outputs of the API calls.
Depending on the implementation, cross-view detection may or may not assume an intact kernel, and therefore may even be applied to detect DKOM.
The main disadvantage of cross-view detection is that it is difficult to implement, especially for a commercial OS\cite{butler2005windows3}.
API calls are provided for a reason: to simplify the interface to kernel and hardware resources.
Cross-view detection must circumvent the API, in many cases providing its own implementation.
Theoretically, in most cases combinations of other API calls could be used in place of a from-scratch implementation.
However, API call combinations are susceptible to the risk of other hooked API calls or duplicate calls to the same underlying code for multiple API functions, a common feature of the \Name{Win32} API \cite{tanenbaum2007modern}.
%especially in the case of commercial OS implementations such as Windows, API calls are provided precisely to allow ease of interface to OS and hardware resources by the user\todo[inline]{Explain this ``disadvantage'' -- hard to implement cross-view detection}. Approaches which circumvent the system API are difficult and time consuming to program, compounding the likelihood of programmer error.  To simplify the task, other API calls can be used, but this presents another risk -- that the rootkit hooked multiple API calls or that the underlying system call corresponding to multiple API calls is the same, as often happens in Win32 \cite{Tanenbaum}.

Several cross-view detection tools have been developed over the years. \Name{Rootkitrevealer} \cite{cogswell2006rootkitrevealer} by \Name{Windows SysInternals} applies a cross-view detection strategy for the purposes of detecting persistent rootkits, i.e., disk-resident rootkits that survive across reboots. \Name{Rootkitrevealer} uses the \Name{Windows} API to scan the file system and registry, and compares the results to a manual parsing of the file system volume and registry hive. \Name{Klister} \cite{rutkowska2004detecting} detects hidden processes in \Name{Windows 2000} by finding contradictions between executive process entries and kernel process entries used by the scheduler. \Name{Blacklight} \cite{butler2006raide} combines  both hidden file detection and hidden process detection. \Name{Microsoft}'s \Name{Strider Ghostbuster} \cite{beck2005detecting} is similar to \Name{Rootkitrevealer}, except that it also detects hidden processes and it has the ability to compare an ``inside the box'' infected scan with an ``outside the box'' scan, in which the operating system is booted from a clean version.

If properly applied, cross-view detection offers high precision rootkit detection\cite{butler2005windows3}. However, cross-view detection alone provides little insight on the \emph{type of the rootkit} and must be combined with recognition methods (e.g., signature/behavioral) to attain this information \cite{szor2005theart,butler2005windows3}. Cross-view detection methods are also cumbersome to update because they require new code, often interfacing with the kernel. Determining, which areas to cross-view, is also a challenging task\cite{butler2005windows3}.

%\begin{figure}[!htbp]
%\centering
%  \includegraphics[width=\linewidth]{./CrossViewDetection.PNG}
% \caption{\label{fig:crossViewDetection}\footnotesize An example of cross-view detection. Suppose that values returned from a Win32 API call are compromised by a rootkit that hooks the \Windows{IAT}. If the anti-malware calls a kernel function both via the Win32 API and via its own implementation which calls the kernel directly, the discrepancy in values will indicate the presence of a user space rootkit. Note that the consistency of these two views alone does not indicate that the system is rootkit-free, but rather that the two views are consistent. If a rootkit were contained in the underlying kernel-level function the two views would still be consistent. A discrepancy between the two views does, however, indicate, with high probability, the presence of a hook within a Win32 API DLL.}
%%\vspace{-2em}
%\end{figure}
%
\subsection{Invariant Specification}
\sad{Define constraints of an uninfected system.}{Detects DKOM attacks reliably.}{Constraints need to be well-specified, often by hand, and are highly platform-dependent.}

A related approach to cross-view detection, especially applied to detecting DKOM, involves pinpointing kernel invariants -- aspects of the kernel that should not change under normal OS behavior -- and periodically monitoring these invariants. One example of a kernel invariant is that the length of the executive and kernel process linked lists should be equal, which is violated in the case of process hiding (cf. Fig.~\ref{fig:DKOM}). Petroni et al.~\cite{petroni2006architecture} introduce a framework for writing security specifications for dynamic kernel data structures. Their framework consists of five components: a low-level monitor used to access kernel memory, a model builder to synthesize the raw kernel memory binary into objects defined by the specification, a constraint verifier that checks the objects constructed by the model builder against the security specifications, response mechanisms that define the actions to take upon violation of a constraint, and finally, a specification compiler, which compiles specification constraints written in a high-level language into a form readily understood by the model builder.

Compelling arguments can be made in favor of the kernel-invariant based security specification approaches described above\cite{petroni2006architecture}: first, they allow a decoupling of site-specific constraints from system-specific constraints. An organization may have a security policy that forbids behavior not in direct violation of proper kernel function (e.g., no shell processes running with root UUID in \Name{Linux}). Via a layered framework, specifications can be added without changing low-level implementations. Unlike signature-based approaches relying on rootkits having overlapping code fragments with other malwares, kernel-invariant specifications catch all DKOM attacks that violate particular specification constraints with only few false positives. The specification approach can even be extended beyond DKOM.  However, using kernel invariant specification is not without its own difficulties. Proper and correct framework implementation is a tremendous programming effort in itself\cite{petroni2006architecture}. For closed-source operating systems like \Name{Windows}, full information about kernel data structures and their implementation is seldom available, unless the specification framework tool is being developed as part of or in cooperation with the operating system vendor. Specification approaches can also exhibit false positives, for example, if kernel memory is accessed asynchronously via an external PCI interface like \Name{Copilot} \cite{petroni2004copilot}, a legitimate kernel update to a data structure may trigger a false positive detection simply because the update has not completed. Finally, the degree to which the invariant-specification approach works depends on the quality of the specification\cite{petroni2006architecture}. Correct specifications require in-depth domain specific knowledge about the kernel and/or about the organization's security policy. Due to the massive sizes and heterogeneities of operating systems, even those of similar distribution, discerning a complete list of specifications is implausible without incorrect specifications that result in false positives. While a similar approach may have applications to other types of stealth malwares, Petroni et al.~\cite{petroni2006architecture} introduced invariant specification as specific solution tailored to DKOM rootkits . Although invariant specification provides more readily available diagnostic information than cross-view detection because it tells which invariants are violated, invariant specification cannot discern the type of DKOM rootkit. Hence, more generic signature/behavioral techniques are required.

\subsection{Hardware Solutions}
\sad{Via hardware interface, use a clean machine to monitor another machine for the presence of rootkits/stealth malware.}{Does not require an intact kernel on the monitored machine.}{Cannot interpose execution of malicious code.}

The key motivation behind hardware based integrity checking is quite simple: a well-designed rootkit that has successfully subverted the OS kernel, or theoretically even the virtual layer and BIOS of a host machine, can return a spurious view of memory to a host based intrusion detection system (HIDS) such that the HIDS has no way of detecting the attack because its correct operation requires an intact kernel.
Rather than relying on the kernel to provide a correct view of kernel memory, hardware solutions have been developed. For example, \Name{Copilot} \cite{petroni2004copilot} uses direct memory access (DMA) via the PCI bus to access kernel memory from the hardware of the host machine itself and displays that view of memory to another machine. This in turn subverts any rootkit's ability to change the view of memory, barring a rootkit implemented in hardware itself. Depending on the hardware integrity checker in question, further analysis of kernel memory on the host machine may be performed via a supervisory machine alone, or alternatively with the aid of additional hardware. \Name{Copilot} uses a coprocessor to perform fast hashes over static kernel memory and reports violations to a supervisory machine. Analysis mechanisms similar to those in \cite{srivastava2011operating,siddiqui2008survey} are employed on the supervisory machine in conjunction with DMA in order to properly parse kernel memory.

%\begin{figure}[!htbp]
%\centering
%  \includegraphics[width=0.4\linewidth]{./copilot.PNG}
% \caption{\label{fig:Copilot}\footnotesize Copilot and similar DMA approaches leverage a physically separate administrative machine designed for intrusion detection to monitor the memory space of a host. Image from \cite{Copilot}.}
%%\vspace{-2em}
%\end{figure}
%%copilot fig here
%
Using DMA to observe the memory layout of the host system from a supervisory system is appealing since a correct view of host memory is practically guaranteed. However, like all of the techniques that we have discussed, hardware based integrity checking is no silver bullet. In addition to the added expense and annoyance of requiring a supervisory machine, DMA based rootkit detection techniques can only detect rootkits, but they cannot intervene in the hosts execution. They have no access to the CPU and, therefore, cannot prevent or respond to attacks directly.
This CPU access limitation not only means that CPU registers are invisible to DMA, it also means that the contents of the CPU cache cannot be inspected, leaving the theoretical possibility of a rootkit hiding malicious code in the cache. However, a more pressing concern is that because DMA approaches operate at a lower level than the kernel they do not have a clear view of dynamic kernel data structures, which requires that these structures need to be located in memory, a problem discussed in \cite{dolangavitt2009robust}. Even after locating the kernel data structures, there remains a synchronization issue between DMA operations and the host kernel: DMA cannot be used to acquire kernel locks on data structures. Consequently, race conditions result when the kernel is updating a data structure contemporaneous with a DMA read. False positives were observed by Baliga et al.~\cite{baliga2011data} for precisely this reason. An inelegant solution \cite{petroni2004copilot} is to simply re-read memory locations containing suspicious values. Another consideration when implementing DMA approaches is the timing of DMA scans. Both \cite{petroni2004copilot} and \cite{baliga2011data} employed synchronous DMA scans, which are theoretically susceptible to timing attacks. Petroni et al.~\cite{petroni2004copilot} suggested introducing randomness to the scan interval timings to overcome this susceptibility.

\subsection{Virtualization Techniques}
\label{sec:virtualization}
\sad{Use virtual environments to detect malware.}{Can be used to detect kernel-level rootkits and interpose state.}{Vulnerable to anti-emulation.}
Virtualization, though technologically quite different from DMA, aims to satisfy the same goal of inspecting resources of the host machine without relying on the integrity of the operating system. Several techniques for rootkit detection, mitigation, and profiling that leverage virtualization have been developed, including \cite{srivastava2011operating, rutkowska2005system, garfinkel2003virtual, seshadri2007secvisor, riley2008guest}. The idea behind virtualization approaches is to involve a virtual machine monitor, a.k.a. the \emph{hypervisor}, in the inspection of system resources. Since the hypervisor resides at a higher level of privilege than the guest OS, either on the hardware itself or simulated in software, and the hypervisor controls the access of the guest OS to hardware resources, the hypervisor can be used to inspect these resources even if the guest OS is entirely compromised. Unlike \Name{Copilot}'s approach, in which kernel writes and DMA reads are unsynchronized, the hypervisor and the guest OS kernel are synchronous since the guest OS relies on the hypervisor for resources. Moreover, the hypervisor has access to state information in the CPU, meaning that it can interpose state, a valuable ability not only for rootkit detection, prevention and mitigation, but also for computer forensics. Additionally, the hypervisor can be used to enforce site specific hardware policies, for example the hypervisor can prevent promiscuous mode network interface operation \cite{garfinkel2003virtual}. Hypervisors themselves may be vulnerable to attack, but the threat surface is much smaller than for an operating system: hypervisors have been written in as little as 30,000 lines of C code as opposed to the tens of millions of lines of code in modern \Name{Windows} and \Name{Linux} distributions. Significant security validations on hypervisors have also been conducted by academia, private security firms, the open source community, and intelligence organizations (e.g., CIA, NSA) \cite{garfinkel2003virtual}.

%\begin{figure}[!htbp]
%\centering
%  \includegraphics[width=\linewidth]{./VirtualMachineIntrospection.PNG}
% \caption{\label{fig:Livewire}\footnotesize A generalized schematic of the operation of \emph{Livewire}, the first intrusion detection system leveraging virtual machine introspection techniques. Right: The guest OS runs on a virtual machine. Left: The IDS runs on the virtual machine monitor, or hypervisor, and contains an interface library designed to interpret OS data structures from hardware state as well as an abstraction framework for building IDS policies and IDS policy modules. Since the hypervisor controls the hardware resources of the guest OS,the IDS can interpose guest OS state and maintains an accurate view of state even under full kernel compromise of the guest OS.Image from \cite{garfinkel2003virtual}.}
%%\vspace{-2em}
%\end{figure}
%
Garfinkel and Rosenblum \cite{garfinkel2003virtual} created \Name{Livewire}, a proof of concept intrusion detection system residing at the hypervisor layer. The authors refer to their approach as virtual machine introspection since the design utilizes an OS interface to translate raw hardware state into guest OS semantics and inspect guest OS objects via a policy engine, which interfaces with the view presented by the translation engine. The policy engine effectively is the intrusion detection system, which performs introspection on the virtual machine. The policy engine can monitor the machine and can also take mitigation steps such as pausing the state of the VM upon certain events or denying access to hardware resources.

A particular advantage of virtualization is that it can be leveraged to prevent rootkits from executing code in kernel memory -- a task that all kernel rootkits must perform to load themselves into memory in the first place\cite{seshadri2007secvisor}. This includes DKOM rootkits: although the changes to kernel objects themselves cannot be detected as code changes to the kernel, code must be introduced at some point to make these changes.  To this end, Seshadri et al.~\cite{seshadri2007secvisor} formulated \Name{SecVisor}. In contrast to the software-centric approach of \Name{Livewire}, \Name{SecVisor} leverages hardware support for virtualization of the \Name{x86} instruction set architecture as well as \Name{AMD}'s secure virtual machine technologies. \Name{SecVisor} intercepts code via modifications to the CPU's memory management unit (MMU) and the I/O memory management unit (IOMMU), so that only code conforming to a user supplied policy will be executable. As such, kernel code violating the policy will not run on the hardware. In fact, \Name{SecVisor}'s modification to the IOMMU even protects the kernel from malicious writes via a DMA device. \Name{SecVisor} works by allowing transfer of control to kernel mode only at entry points designated in kernel data structures, then performing comparisons to shadow copies of entry point pointers. This approach is analogous to that used in memory integrity checking modules of heavyweight dynamic binary instrumentation (DBI) frameworks like \Name{Valgrind} \cite{nethercote2007valgrind}.

Unfortunately, \Name{SecVisor} has several drawbacks. First, modern \Name{Linux} and \Name{Windows} distributions mix code and data pages \cite{riley2008guest}, while \Name{SecVisor}'s approach -- enforcing \textit{write XOR execute} ($W \oplus X$) permissions for kernel code pages through hardware virtualization -- assumes that kernel code and data are not mixed within memory pages. The approach also fails for pages that contain self-modifying kernel code. Second, \Name{SecVisor} requires modifications to the kernel itself -- a difficult proposition for adoption on closed-source operating systems like \Name{Windows}.

Riley et al.~\cite{riley2008guest} formulated \Name{NICKLE} (No Instruction Creeping into Kernel Level Executed), which, like \Name{SecVisor}, leverages virtualization to prevent execution of malicious code in kernel memory. \Name{NICKLE} approaches the problem via software virtualization and overcomes some of the limitations of \Name{SecVisor}. \Name{NICKLE} works by shadowing every byte of kernel code in a separate shadow memory writable only by the hypervisor. Because the hypervisor resides in a higher privilege domain than the kernel, even the kernel cannot modify the shadowed code. The shadowed code gets authenticated either during bootstrapping, when the kernel is loaded into memory, or when drivers are mounted or unmounted. Authentication consists of cryptographic hash comparisons of code segments with known good values taken by OS vendors or distribution maintainers. When the operating system requires access to kernel-level code an indirection mechanism in the hypervisor reroutes this request to shadow values. To maintain transparency to the guest OS, this guest memory address indirection is implemented after the ``virtual to physical'' address translation in the hypervisors MMU. When the guest VM attempts to execute kernel code, a comparison is made to shadow memory. If the code is the same, then the shadow memory copy is executed. If the kernel memory and shadow memory code differ then one of several responses can be taken including logging and observing -- an approach extended by Riley et al.~\cite{riley2009multi} for rootkit profiling -- rewriting the malicious kernel code with shadow values and continuing execution, or breaking execution.
\Name{NICKLE}'s approach has two key advantages over \Name{SecVisor}: first, it does not assume homogeneous code and data pages. Second, it does not require any modifications to kernel code. These benefits, however, incur hits in speed due to software virtualization and memory indirection costs and require a two-fold increase in memory for kernel code\cite{riley2008guest}.  An additional complication arises from code relocation: when driver code is relocated in kernel memory, cryptographic hashes change. Riley et al.~\cite{riley2008guest} handle this problem by tracking and ignoring relocated segments. Also, the \Name{NICKLE} implementation in \cite{riley2008guest} does not support kernel page swapping, which  would need to ensure that swapped in pages had the same cryptographic hash as when they were swapped out. Finally, \Name{NICKLE} is ineffective in protecting self-modifying kernel code, a phenomenon present in both \Name{Linux} and \Name{Windows} kernels.

Srivastava et al.~\cite{srivastava2011operating} leverage virtualization in their implementation of \Name{Sherlock} -- a defense system against the \Name{Illusion} attack mentioned in Sec.~\ref{sec:hybrid}. \Name{Sherlock} uses the \Name{Xen} hypervisor to monitor system call execution paths. Specifically, the guest OS is assumed to run on a virtual machine controlled by the \Name{Xen} hypervisor. Monitoring of memory is conducted by the hypervisor itself with the aid of a separate security VM for system call reconstruction, analysis, and notification of other intrusion detections systems. Watchpoints are manually and strategically placed in kernel memory off-line, and a \Name{B\"uchi} automaton \cite{srivastava2011operating} is constructed, which efficiently describes the expected and unexpected behavior of every system call in terms of watch points. Each watch point contains the \Windows{VMCALL} instruction, so that when it is hit, it notifies the hypervisor. Watch point identifiers are passed to the automaton as they are executed. During normal execution, the automaton remains in benign states and watch points are discarded. When a malicious state is reached, the hypervisor logs watch points and suspends the state of the guest VM. The function specific parameters at each watch point corresponding to a malicious state are then passed to the security VM for further analysis. An important consideration of this implementation is where to place watchpoints to balance effectiveness and efficiency. Srivastava et al.~\cite{srivastava2011operating} manually chose watch point locations based on a reachability analysis of a kernel control flow graph, but suggest that an autonomous approach \cite{ganapathy2005automatic} could be implemented.

%\begin{figure}[!htbp]
%\centering
%  \includegraphics[width=\linewidth]{./Sherlock.PNG}
% \caption{\label{fig:Sherlock}\footnotesize A generalized schematic of the operation of Sherlock. Image from \cite{srivastava2011operating}.}
%%\vspace{-2em}
%\end{figure}
%

\section{Pattern-Based Stealth Malware Countermeasures}
\label{sec:signatures_heuristics}

Pattern-based approaches aim to achieve more generic recognition of heterogeneous malwares.
While these approaches offer potential for efficient updates and scalability beyond most component protection techniques, their increased generalization causes them to tend to exhibit higher recall rates but lower precision rates.
Pattern-based approaches can be applied on static code fragments or on dynamic memory snapshots or behavioral data (e.g., system/API calls, network connections, CPU usage, etc.) and may be coupled with component protection approaches  \cite{szor2005theart,jang2016detecting,faruki2016droidanalyst}.
Static analysis has the advantage that it is fast\cite{szor2005theart}, since the raw code is inspected but not executed; there is no need for an emulated environment.
However, dynamic code mutation mechanisms outlined in Sec.~\ref{sec:code_mutation}, are often able to hide functionalities from static code analyzers \cite{jang2016detecting}, and obfuscated code recognition techniques discussed in Sec.~\ref{sec:feature_state} often rely on an emulated environment for decryption.
Dynamic analysis tools that leverage emulated environments (cf. Sec.~\ref{sec:antiemulation}) are not fooled so easily, since much of the underlying code, data, and behavior of the malware is revealed.
However, dynamic analysis is potentially vulnerable to anti-emulation techniques (cf. Sec.~\ref{sec:antiemulation}).
While dynamic analysis techniques generally suffer lower false-positive rates than static analysis techniques \cite{jang2016detecting} dynamic techniques are far slower than static approaches \cite{szor2005theart} due to the need for an emulated environment, and can only be feasibly executed on a small number of code samples for short amounts of time. Consequently, hybrid approaches \cite{faruki2016droidanalyst} are often employed, in which static methods are used to select suspicious samples, which are then passed to a dynamic analysis framework for further categorization.

\subsection{Signature Analysis}
\sad{Compare code signatures with database of malicious signatures via exact-matching or machine-learnt techniques.}{Detects known malwares reliably.}{Difficult to detect novel malware types.}

Code-signature-based malware defenses are techniques that compare malware signatures -- fragments of code or hashes of fragments of code -- to databases of signatures associated with known attacks.
Although signatures cannot be directly used to discover new exploits \cite{butler2005windows3}, they can do so indirectly due to component overlap between malwares \cite{abouassaleh2004ngram,reddy2005new}. %\td{Is there a real difference between \cite{abouassaleh2004ngram} and \cite{abouassaleh2004detection}?}
%no, so I removed one of them
Ironically, shared stealth components have sometimes given away the presence of malwares that would have otherwise gone unnoticed \cite{szor2005theart}. Moreover, some byte sequences of length $n$ ($n$-grams) specific to a common type of exploit are often present even under metamorphism of the code. Machine learning approaches to malware classification via n-gram and sequence analysis have been widely studied and deployed as integral components of anti-malware systems for more than ten years \cite{szor2005theart,siddiqui2008survey}.

While most in-memory rootkit signature recognition strategies behave much like on-disk signature strategies for detecting and classifying malicious code by matching raw bytes against samples from known malware, DKOM rootkit detection requires a different approach.
Since DKOM involves changing existing data fields within OS data structures to hide them from view of certain parts of the OS, DKOM signature scanning techniques instead perform memory scans using signatures designed to pinpoint hidden data structures in kernel memory.
Surprisingly, memory signature scans are useful both in live and forensics contexts. Chow et al.~\cite{chow2005shredding} demonstrated that structure data in kernel memory can survive up to 14 days after de-allocation, provided that the machine has not been rebooted.
Schuster \cite{schuster2006searching} formulated a series of signature rules for detecting processes and threads in memory, for the general purpose of computer forensics.
Several spinoffs of this approach have been implemented.
Unfortunately, many of these signature approaches can be subverted by rootkits that change structure header information.
Dolan-Gavitt et al.~\cite{dolangavitt2009robust} employed an approach to automatically obtain signatures for kernel data structures based on values in the structures that, if modified, cause the OS to crash.
The approach includes data structure profiling and fuzzing stages.
In the profiling stage, a clean version of the operating system is run, while a variety of tasks are performed.
Kernel data structure fields commonly accessed by the OS are logged.
The goal of the profiling stage is to determine fields that the OS often accesses and weed out fields that are not widely used for consideration as signatures. %, since unused fields are unlikely to be of use.
The fuzzing stage consists of running the OS on a virtual machine, pausing execution, and modifying the values in the candidate structure.
After resuming, candidate structure values are added to the signature list if they cause the kernel to crash.
%After resuming, the guest OS kernel candidate structure value is added to the signature list if it causes the kernel to crash.\td{check this sentence}
The approach in \cite{dolangavitt2009robust} is in many ways the complement of the kernel invariant approach in \cite{baliga2011data}.
Instead of traversing kernel data structures and examining which invariants are violated, Dolan-Gavitt et al. scan all of kernel memory for plausible data structures.
If certain byte offsets within the detected structures do not contain signatures consistent with certain values, then the detections cannot correspond to actual data structures used by the kernel because otherwise they would crash the operating system.
A limitation of the approach in \cite{dolangavitt2009robust} is that it is susceptible to attack by scattering copies or near copies of data structures throughout kernel memory.

\subsection{Behavioral/Heuristic Analysis}
\sad{Derived from system behavior rather than code fragments.}{Not affected by attempts to hide malicious code.}{Cannot detect malware prior to execution.}

On the host level, signatures are not the only heuristic used for intrusion detection. System call sequences for intrusion and anomaly detection \cite{feng2003anomaly,forrest1996sense,giffin2002detecting,hofmeyr1998intrusion,krohn2007information,mutz2007exploiting,sekar2001fast} are an especially popular alternative for rootkit analysis since hooked \Windows{IAT} or \Windows{SSDT} entries often make repetitive patterns of system calls.
Interestingly,  rootkits can also be detected by network intrusion detection systems (NIDSs), because rootkits in the wild are almost always small components of a larger malware. The larger malware often performs some sort of network activity such as C2 server communication and synchronization across infected machines, or infection propagation.
This is even true for some of the most sophisticated stealth malwares that leverage rootkit technologies to hide network connections from the host \cite{bencsath2012cousins}, while a rootkit cannot hide connections from a network.
Therefore, signature scans at the network level as well as traffic flow analysis techniques can give away the presence of the larger malware as well as the underlying rootkit.
Botnets with rootkits that effectively hide the behavior of an individual host may be easier to detect when analyzing macro, network-level traffic~\cite{yu2015fool,yu2015modeling}.
NIDSs also have the advantage that they provide isolation between the malware and the intrusion detection system, reducing a malware's capacity to spoof or compromise the IDS.
However, NIDSs have no way of inspecting the state of a host or interposing a host's execution at the network level.
A hybrid approach, which extends the concept of cross-view detection is to compare network connections from a host query with those detected at the network level \cite{fink2005visual}.
A discrepancy indicates the presence of a rootkit.

\subsection{Feature Space Models vs. State Space Models}
\label{sec:feature_state}
\sad{Classify code sequences.}{Can detect similar malicious code patterns.}{Cannot detect unseen malicious code.}
\begin{figure*}[!ht]
  \centering
  \subfloat[Binary OPCODE n-gram histogram classification\label{fig:FeatureSpace:SVM}]{\includegraphics[height=.15\textheight]{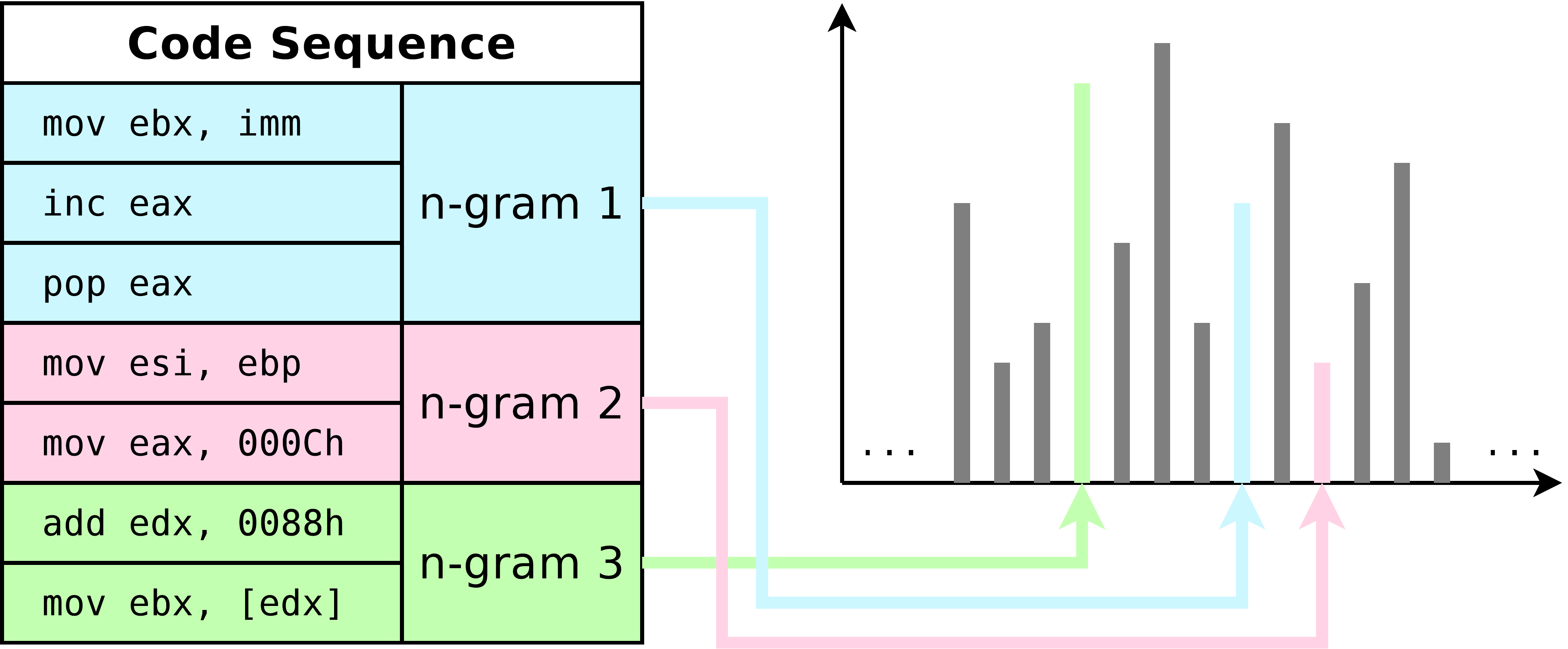}\hspace*{.05\textwidth}\includegraphics[height=.15\textheight]{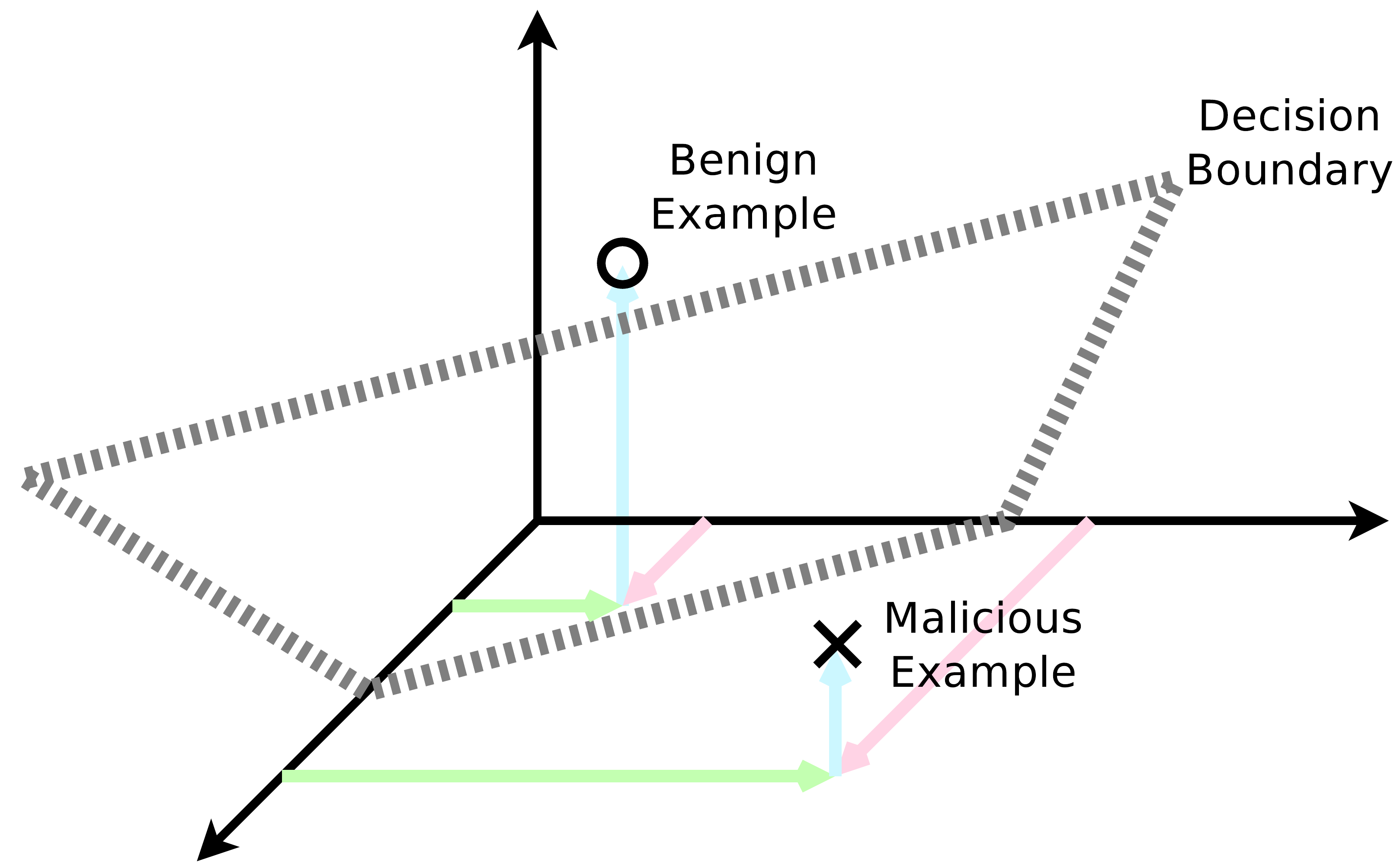}}\\
  \subfloat[OPCODE sequence analysis\label{fig:FeatureSpace:HMM}]{%
    \begin{minipage}[c]{.35\textwidth}\centering\includegraphics[height=.17\textheight]{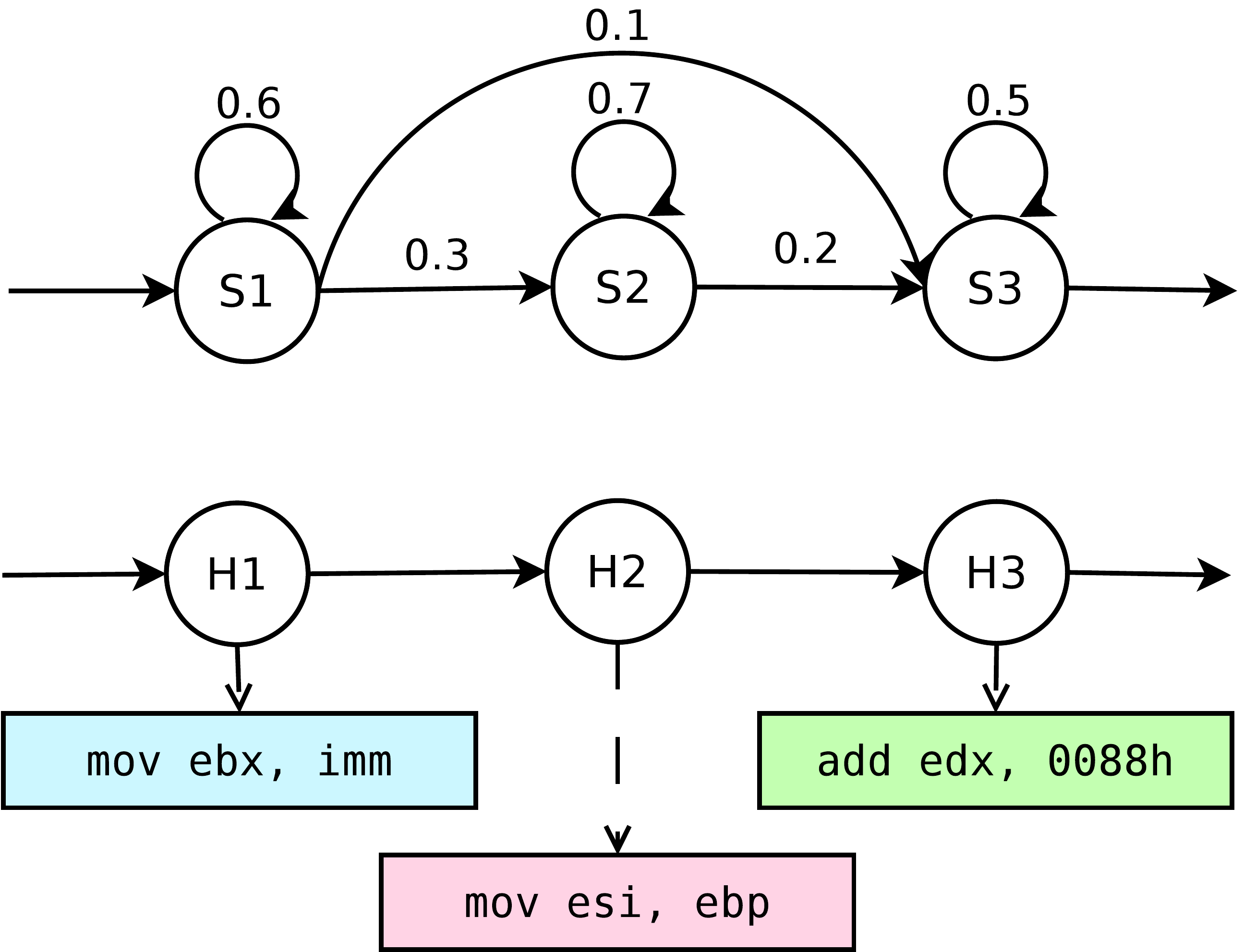}\end{minipage}%
    \begin{minipage}[c]{.28\textwidth}\small\begin{center}
\renewcommand\arraystretch{1.3}
\setlength{\tabcolsep}{.4em}
State transition matrix\\[2ex]
\begin{tabular}{r||c|c|c|c}
  State & S1  & S2  & S3  & \dots{}\\\hline\hline
  S1    & 0.6 & 0.3 & 0.1 & \dots{}\\\hline
  S2    & 0   & 0.7 & 0.2 & \dots{}\\\hline
  S3    & 0   & 0   & 0.5 & \dots{}\\\hline
  \dots &\dots&\dots&\dots
\end{tabular}
\end{center}
\end{minipage}%
    \begin{minipage}[c]{.32\textwidth}\small\begin{center}
\renewcommand\arraystretch{1.3}
\setlength{\tabcolsep}{.4em}
Observation probabilities\\[2ex]
\begin{tabular}{c||c|c|c|c}
  OPCODE & S1  & S2  & S3  & \dots{}\\\hline\hline
  \scriptsize\tt mov ebx, imm & 0.6  & 0.1  & 0.01 & \dots{}\\\hline
  \scriptsize\tt mov esi, epb & 0.07 & 0.5  & 0.06 & \dots{}\\\hline
  \scriptsize\tt add edx, 0088h & 0    & 0.04 & 0.81 & \dots{}\\\hline
  \dots &\dots&\dots&\dots
\end{tabular}
\end{center}
\end{minipage}%
  }
  \cap{fig:FeatureSpace}{Feature Space vs. State Space OPCODE classification}{This diagram depicts \protect\subref*{fig:FeatureSpace:SVM} a schematic interpretation of a linear classifier that separates benign and malicious OPCODE $n$-grams in feature space and \protect\subref*{fig:FeatureSpace:HMM} a sequential OPCODE analysis using a hidden Markov model. The feature space model must explicitly treat histograms of $n$-grams as independent dimensions for varying values of $n$ in order to capture sequential relationships. This approach is only scalable to a few sequence lengths. HMMs, on the other hand, impose a Markov assumption on a sequence of hidden variables which emit observations. State transition and observation probability matrices are inferred via expectation maximization on training sequences. An HMM factor graph is shown on the bottom left.}
\end{figure*}

As discussed above, code signatures and application behaviors/heuristics can be used in a variety of ways to detect and classify intrusions, and they operate across many levels of the intrusion detection hierarchy. For example, encrypted viruses are particularly robust against code signatures -- until they are decrypted -- but during emulation, once the virus is in memory, it might be particularly susceptible to signature analysis. This analysis may range from a simple frequency count of OPCODES to more sophisticated machine-learning techniques.

Machine learning models can be divided into feature space and state space models.
Examples of both are shown in Fig.~\ref{fig:FeatureSpace}, in which code fragments are classified as malicious or benign based on their OPCODE $n$-grams.
Feature space models aim to treat signature/behavioral features as a spatial dimension and parameterize a manifold within this high-dimensional feature space for each class.
Feature space models can be further broken down into generative and discriminative models.
Generative models aim to model the joint distribution $P(x,y)$ of target variable $y$ and spatial dimension $x$, and perform classification via the product rule of probability: $P(y|x) = \frac{P(x,y)}{P(x)}$.
Discriminative classifiers aim to model $P(y|x)$ directly \cite{bishop2006pattern}.
By treating the frequencies of distinct n-gram hashes as elements of a high-dimensional feature vector, for example, the \emph{input feature space} becomes the domain of these vectors.
Support vector machines (SVMs), which are discriminative feature space classifiers, aim to separate classes by fracturing the input feature space (or some transformation thereof) by a hyperplane that maximizes soft class margins.
An advantage of feature space models is that in high dimensions, even if only a few of the dimensions are relevant, different classes of data tend to separate\cite{bishop2006pattern}.
However, feature space models do not explicitly account for probabilistic dependencies, and a good feature space from a classification accuracy perspective is not necessarily intuitive.

State space models are used to infer probabilities about \emph{sequences}.
They leverage the fact that certain sequences of instructions exist within malicious binary due to functional overlap as well as general lack of creativity and laziness of malware authors. State space models can also be applied to functional sequences (e.g., sequences of system calls or network communications). The intuition is that we can use certain types of functional behaviors to describe classes of malware in terms of what they do, for example, ransomwares like \Name{CryptoLocker} typically generate a key that they use to encrypt files on disk and subsequently attempt to send that key to a C2 server. After a certain amount of time, they remove the local copy of the key and generate a ransom screen demanding money for the key \cite{ogorman2012ransomware}. State space models for intrusion recognition aim to recognize these sorts of malicious sequences.

%We can quantify \Name{CryptoLocker} and similar malwares in terms of functional sequences, e.g., \textit{sequences of calls} to \Name{Microsoft}'s encryption APIs to encrypt only certain types of files, as well as network communication sequences matching certain patterns.
%In intrusion recognition, state space models aim to autonomously quantify the likelihood of a class of malware or benign behavior given such sequences.

The most common type of state space models are based in some form on the Markov assumption -- that recent events will be independent of events that happened in the far past.
While the Markov assumption is not always valid, it makes sequential inference tractable and is often reasonable.
For example, if the last fifty assembly instructions were devoted to adding elements from two arrays together and incrementing respective pointers, with no other knowledge, it is a reasonable assumption that the next few instructions will add array elements.
On the other hand, knowing that ``hello world'' was printed to the screen a million instructions ago provides little information about the probability of the next instruction.
Hidden Markov models (HMMs) are perhaps the most widely used type of Markov models \cite{bishop2006pattern} and have been particularly useful in code analysis including recognition of  metamorphic viruses\cite{sridhara2013metamorphic,venkatesan2008code,venkatachalam2010detecting,runwal2012opcode,lin2011hunting,desai2008towards,wong2006analysis,wong2006hunting,attaluri2009profile}. HMMs assume that latent variables, which take on \emph{states}, are linked in a Markov chain with conditional dependencies on the previous states. The \textit{order} of the HMM corresponds to the number of previous states on which the current state depends, for example, in an $n$-th order HMM the current state depends only on the previous $n$ states.

In HMMs, previous states are fused with current states via a transition probability matrix $A$ governing the Markov chain, and an observation probability matrix $B$ -- the probability of observing the data in a given state -- as well as an initial state vector $\pi$. $A$, $B$, and $\pi$ can be estimated via expectation maximization (EM) inference on observation sequences $O$, which aims to find the maximum likelihood estimate (MLE) of a sequence of observations, i.e., $\argmax_{\lambda}P(O|\lambda)$, where $\lambda=(A,B,\pi)$. Although EM is guaranteed to converge to a local likelihood maximum, it is not guaranteed to converge to the global optimum. In the context of HMMs, this inference is usually carried out via the Baum-Welch algorithm \cite{bishop2006pattern} (aka. the Forward-Backward algorithm), which iterates between forward and backward passes and an update step until the likelihood of the observed sequence $O$ is maximized with respect to the model.

The usage of HMMs for metamorphic virus detection has been documented in \cite{sridhara2013metamorphic,venkatesan2008code,venkatachalam2010detecting,runwal2012opcode,lin2011hunting,desai2008towards,wong2006analysis,wong2006hunting,attaluri2009profile}.\footnote{HMMs are used for many sequential learning problems and have several different notations. Here, we borrow notation from \cite{sridhara2013metamorphic}}
These works assume a predominantly decrypted virus body, i.e., little to no encryption within the body to begin with, or that a previously encrypted metamorphic has been decrypted inside an emulator.
The number of hidden states and therewith the state transition matrix is generally chosen to be small (2-3), while observation matrix is larger, with rows consisting of conditional probabilities of OPCODES for given states.
For metamorphic detection, the semantic \emph{meaning} of the states themselves is unclear as is the optimal number of hidden states -- they only reflect some latent structure within the code.
This contrasts with other applications of HMMs, for example, in handwriting sequence recognition, the latent structure behind a noisy scrawl of an ``X'' is the letter ``X'' itself; thus with proper training data there should be 26 latent variables (for the English alphabet) with transition probabilities corresponding to what one might expect from an English dictionary, e.g., a ``T'' $\rightarrow$ ``H'' transition is much more likely than a ``T'' $\rightarrow$ ``X'' transition.%\footnote{In a realistic application, one would define $A$ and $B$ as priors and use another classifier (e.g., a convolutional neural network to estimate each individual digit); the HMM would be run over the sequence for post-processing\td{Can we simply remove this foot note? I don't think that it is required, and it IMHO doesn't add any value}}

A common metamorphic virus recognition measure is the thresholded negative log-likelihood probability per \Name{OPCODE} \cite{wong2006analysis,runwal2012opcode,sridhara2013metamorphic} obtained from a \textit{forward} pass on an HMM, i.e.:
$$-\frac{\log(p(O_1,..,O_N,z_1,..,z_N))}{N},$$
where $O_1, \hdots, O_N$ are \Name{OPCODE}s in an $N$-length program and $z_1, \hdots, z_N$ are the hidden variables. The per-opcode normalization is required because different programs have different lengths. Most of the HMMs used in these works are first-order HMMs, in which the state space probability distribution of hidden variable $z_n$ is conditioned only on the value of $z_{n-1}$ and the current observation. For a $k-th$ order HMM, the probability of $z_{n}$ is conditioned on $z_{n-1} \hdots z_{n-k}$. However, the time complexity of HMMs increases exponentially with their order.
Although in their works \cite{sridhara2013metamorphic,venkatesan2008code,venkatachalam2010detecting,runwal2012opcode,lin2011hunting,desai2008towards,wong2006analysis,wong2006hunting,attaluri2009profile} the authors claim that the number of hidden variables did not seem to make a difference, they might if higher-order Markov chains were used.
As Lin and Stamp \cite{lin2011hunting} discuss, one problem with HMMs is that it ultimately measures similarity between code sequences; if the inter-class to intra-class sequential variation is large enough due to some exogenous factor such as very similar non-viral code in train/test, then HMM readout may be error-prone.

\section{Toward Adaptive Models for Stealth Malware Recognition}
\label{sec:open_world_ids}

A large portion of the malware detected by both component protection and generic recognition techniques is previously observed malware with known signatures, deployed by \emph{script kiddies} -- attackers with little technical expertise that predominantly use pre-written scripts to propagate existing attacks \cite{zanero2004unsupervised}.
Systems with up-to-date security profiles are not vulnerable to such attacks.

Sophisticated stealth malwares, on the other hand, have propagated undetected for long periods of time because they do not match known signatures, do not attack protected system components with previously seen patterns, and mask harmful behaviors as benign.
To reduce the amount of time that these previously unseen stealth malwares spend propagating in the wild, component protection and generic recognition techniques alike must be able to quickly recognize and adapt to new types of attacks.
Typically, it is slower to adapt component techniques than it is to adapt generic recognition techniques because new hardware and software are required. However, even more generic algorithmic techniques may take time to update and this must be factored into the design of an intrusion recognition system.

The choice of the algorithm for efficient updates is only one of several considerations that must be addressed in an intrusion recognition system. More elementary is how to autonomously make a decision that additional training data is needed and that the classifier \textit{needs to be updated} in the first place. In short, an intrusion recognition system must be \textit{adaptive} in order to efficiently mitigate the threat of stealth malware. It must also be \textit{interpretable} to yield actionable information to human operators and incident response modules. Unfortunately, many systems proposed in the literature are neither adaptive nor interpretable. We have isolated six flawed modeling assumptions, which we believe must be addressed at the algorithmic level. We discuss these flawed assumptions in Sec.~\ref{sec:flawed_assumptions}, and propose an algorithmic framework for attenuating them Sec.~\ref{sec:adaptive}.

\subsection{Six Flawed Assumptions}
\label{sec:flawed_assumptions}

\subsubsection{Intrusions are Closed Set}

\begin{figure*}[!htbp]
%\label{fig:openset}
  \centering
  \subfloat[\label{fig:openset1}]{\includegraphics[width=0.32\textwidth]{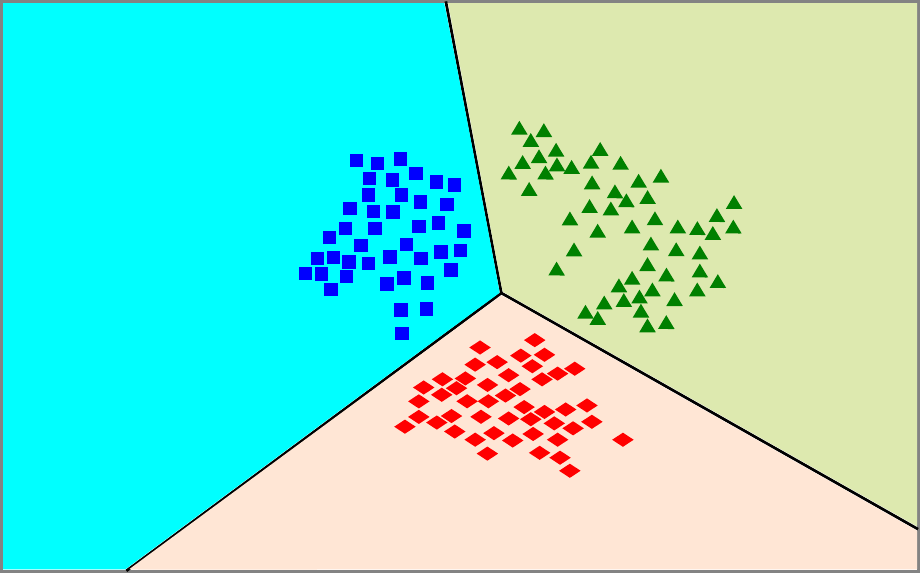}}\hspace*{.01\textwidth}
  \subfloat[\label{fig:openset2}]{\includegraphics[width=0.32\textwidth]{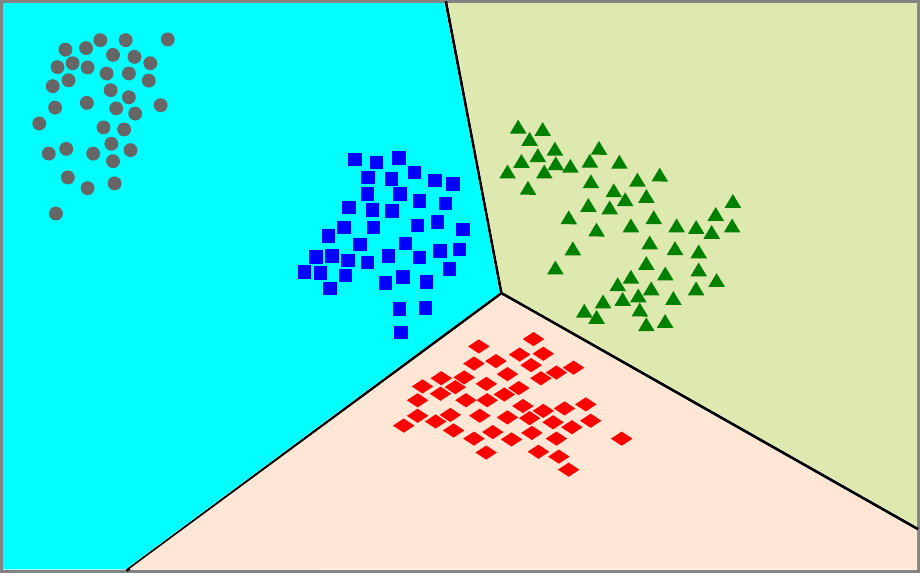}}\hspace*{.01\textwidth}
  \subfloat[\label{fig:openset3}]{\includegraphics[width=0.32\textwidth]{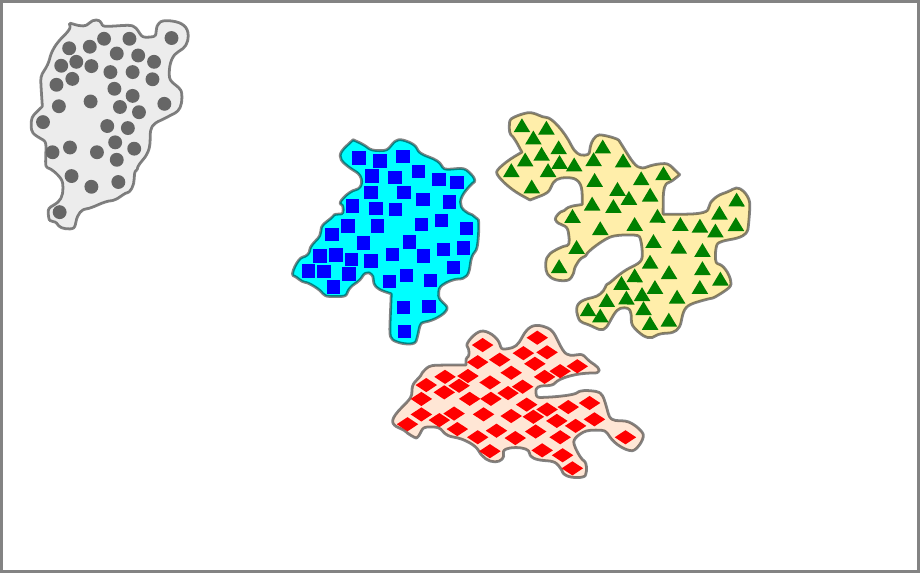}}
  \cap{fig:openset}{Problems with the closed world assumption}{\protect\subref*{fig:openset1} Red, green, and blue points correspond to a training set of different classes of malicious or benign samples in feature space.
The intersecting lines depict a decision boundary learnt from training a linear classifier on this data. \protect\subref*{fig:openset2} The classifier categorizes points from a novel class (gray) as a training class (blue) with high confidence since the gray samples lie far on the blue side of the decision boundary and the classifier labels span infinitely in feature space. \protect\subref*{fig:openset3} An idealized open world classifier bounds the amount of space ascribed to each class' label by the support of the training data, labeling unlabeled (white) space as ``unknown''. With manually or automatically supplied labels, novel classes (gray) can be added to the classifier without retraining on the vast majority of data.
}
%\td{Include a reference to this figure at a proper place in the text}
\end{figure*}

Real intrusion recognition tasks have unseen classes at classification time. Neither all variations of malicious code nor all variations of benign behaviors can be known apriori.
However, the majority of the intrusion recognition techniques cited in this paper implicitly assume that all classes seen at classification time are also present in the training set, evaluating recognition accuracy only for a fixed \emph{closed set} of classes.

%Many \textit{intrusion detection} systems do just that, asc
%In some extreme situations, benchmark evaluation is limited to a binary \textit{malware}/\textit{non-malware} decision\td{dig up some papers to cite} on previously seen classes.
%This binary regime is both particularly prone to overfitting on previously seen malicious and benign samples, and provides no interpretable information.

Consequently, good performance on IDS benchmarks does not necessarily translate into an effective classifier in a real application.
In real \emph{open set} scenarios, a classifier that is trained on $M$ classes of interest, at classification time is confronted with instances of classes that are sampled from a distribution of nearly infinitely-many categories.
Conventional classifiers are designed to separate classes from one another by dividing a hypothesis space into regions and assigning labels respectively.
Effective classifiers roughly seek to approximate the Bayesian optimal classifier on the posterior probability $P(y_i |x;{\cal C}_1, {\cal C}_2, \ldots, {\cal C}_M), i\in \{1,\ldots, M\}$, where $x$ is a feature vector, $y_i$ is a class label, and ${\cal C}_i$ is a particular known class.
However, in the presence of $\Omega$ unknown classes $U_n$ the optimal posterior model would become $P(y_i |x;{\cal C}_1, {\cal C}_2, \ldots, {\cal C}_M, U_{1}, \ldots, U_{\Omega})$. %, i\in \{1\hdots M\}$
%\td{In the open set setup, you need to classify between M classes, and one unknown class (i.e., rejection). This is not integrated into this equation. Should we add a class ${\cal C}_0$ here, and have $i\in{0...M}$}.
% Ethan: I made this more generic. The $i\in \{1\hdots M\}$ was wrong. This is the ``optimal'' case, but is not feasible in reality
 %
Unfortunately, our ability to model this posterior distribution is limited because $U_{1}, \hdots, U_{\Omega}$ are unknown.
Mining negatives during training may help to define known but uninteresting classes (e.g., ${\cal C}_{M+1}$%\td{${\cal C}_0$, see above?}
), but it is impossible to span all negative space, and the costs of negative training become infeasible with increasing numbers of feature dimensions. % due to the curse of dimensionality\td{The last half-sentence is unclear. Why does the curse of dimensionality apply here?}.
Consequently, a classifier may label space belonging to ${\cal C}_i$ far beyond the support of the training data for ${\cal C}_i$.
This fundamental machine learning problem has been termed \emph{open space risk} \cite{scheirer2013towards}. Worse yet, if probability calibration is used, $x$ may be ascribed to ${\cal C}_i$ with high confidence as distance from the positive side of the decision boundary increases.
Therefore, the optimal closed set classifier operating in an open set intrusion recognition regime is not only wrong, it can be wrong while being very confident that it is correct.
An open set intrusion recognition system needs to separate $M$ known classes from one another, but must also manage open space risk by labeling a decision as ``unknown'' when far from known class data. Problems with the closed set assumption as well as desirable open set behavior are shown in Fig.~\ref{fig:openset}.

The binary intrusion recognition task, i.e., \textit{intrusion detection} appears to be a two-class closed set problem. However, each label -- \textit{intrusion} or \textit{no intrusion} -- is respectively a meta-label applied to a collection of many subclasses. While some of the subclasses will naturally be known, others will not, and the problem of open space risk still applies.
%the training data\td{Replace "training data" with "known data"?}.
% Ethan: changed to ``known class data''

%\todo[inline]{add the following comment into later sections}
%Various algorithms for open set recognition have been developed \cite{scheirer2013towards,jain2014multiclass,scheirer2012multi} in the vision community by adapting existing SVM classifiers. Rather than applying an \textit{ad hoc} correction to existing classifiers, we re-examine the open set problem and develop solution from first principles to address open set recognition during training.
%the EVM is designed from first principles to address open set recognition during training.

\subsubsection{Anomalies Imply Class Labels}

The incorrect assumption that anomalies imply class labels is largely related to the closed set assumption, and it is implicit to \emph{all} binary malicious/benign classification systems.
Anomalies constitute data points that deviate from statistical support of the model in question.
In the classification regime, anomalies are data points that are far from the class to which they belong.
In the open set scenario, anomalies should be resolved by an operator, protocol, or other recognition modalities.
Effective anomaly detection is necessary for open set intrusion recognition.
Without it, the implicit assumption of an overly closed set can lead to undesirable classifications because it forces a decision to be made without support of statistical evidence.
The conflation between anomaly and intrusion assumes that anomalous behavior constitutes intrusive behavior and that intrusive behavior constitutes anomalous behavior.
Often, neither of these assumptions hold.
Especially in large networks, previously unseen benign behavior is common: new software installations are routine, new users with different usage patterns come and go, new servers and switches are added, thereby changing the network topology, etc.
Stealth malwares, on the other hand, are specifically designed to exhibit normal behavior profiles and are less likely to be registered as anomalies than many previously unseen benign behaviors.% unlikely to be registered as anomalies than many unseen benign beh.
%Such attacks aim to achieve ``unknown unknown'' status, embedding themselves in exogenous noise.

\subsubsection{Static Models are Sufficient}
\label{sec:incremental}

In the anti-malware domain, the assumption of a static model, which is implicit to the closed set modeling assumption, is particularly insufficient because of the need to update new nominal behavior profiles and malicious code signatures.
The attacks that a system sees \emph{will} change over time.
This problem is often referred to as \emph{concept drift} in the incremental learning literature \cite{masud2011classification}.
Depending on the model, the time required for a full batch retrain may not be feasible. A $k$th-order HMM with $k>>1$, for example, may perform quite well for some intrusion recognition tasks, but at the same time may be expensive to retrain in terms of both time and hardware and may require enormous amounts of training data in order to generalize well. There is a temporal risk associated with the amount of time that it takes to update a classifier to recognize new malicious samples. Therefore, even if that classifier exhibits high accuracy, it may be vulnerable to temporal risk unless it possesses an efficient update mechanism.

\subsubsection{No Feature Space Transformation is Required}
\label{sec:meaningul_features}
A key reason why machine learning algorithms are not overwhelmed by the curse of dimensionality is that, due to statistical correlations, classes of data tend to lie on manifolds that are highly non-linear, but effectively much smaller in dimension than the input space. Obtaining a good manifold representation via a feature transformation obtained from either hand-tuned or machine-learnt optimization is often critical to effective and discriminative classification. Many approaches in the intrusion detection literature simply pass raw log data or aggregated log data directly to a decision machine \cite{mukkamala2002intrusion,catania2012automatic,lee1998data,portnoy2001intrusion,lazarevic2003comparative,ertoz2004minds,rehak2009adaptive}. The inputs often possess heterogeneous scale and nominal and continuous-valued features with aggregations, which ignore temporal scale and varying spatial bandwidths. We contend that, like any other machine learning task, fine-grained discriminative intrusion recognition requires a meaningful feature space transformation, whether learnt explicitly by the classifier or carried out as a pre-processing task. Feature spaces for intrusion recognition have been explored \cite{aggarwal2007data,helmer1998intelligent,ahmad2011feature,nguyen2010improving,lakhina2010feature,middlemiss2003feature,yu2010feature,mukkamala2003feature,stein2005decision}. While this research is a good start, we believe that much additional work is needed.

\subsubsection{Model Interpretation is Optional}

Effective feature space transformations must be balanced with semantically meaningful interpretation. Unfortunately, these two objectives are sometimes conflicting.
Neural networks, which have been successfully applied to intrusion recognition tasks \cite{mukkamala2002intrusion,sung2003identifying,wang2010new,amini2006rt,liu2007letters,srinivasan2006self,shun2008network}, are appealing because they provide the ability to adapt a fixed set of basis functions to input data, thus optimizing the feature space in which the readout layer operates.
However, these basis functions correspond to a composition of aggregations of non-linear projections/liftings onto a locally optimal manifold prior to final readout, and neither the semantic meaning of the space, nor the semantic meaning of the final readout is well understood.
Recent work has demonstrated that neural networks can be vulnerable to adversarial examples \cite{nguyen2015deep,szegedy2014intriguing,goodfellow2015explaining}, which are misclassified with high confidence, yet appear very similar to known class data.
The lack of interpretability of such models means that not only could intrusion recognition systems be vulnerable to such adversarial models, but more critically, machine learning techniques are not yet ``smart'' enough to resolve most potential intrusions. %.\td{I am not sure if we should talk about adversarials here. They are AFAIK only vlaid for DCNN, and I don't think that they have any interpretation in the intrusion detection framework.}
Instead, their role is to alert specialized anti-malware modules and human operators to take swift action.
Fast response and resolution times are critical.
Even if an intrusion detection system offers nearly perfect detection performance, if it cannot provide meaningful diagnostics to the operator, a temporal risk is induced, in which the operator or anti-malware wastes valuable time trying to diagnose the problem \cite{sommer2010outside}.
Also, as we have seen from previous sections, many potential malware signals (e.g., hooking) may stem from legitimate uses. It is important to know why an alarm was triggered and which features triggered it to determine and refine the system's response to both malicious and benign behaviors.

The interpretation and temporal risk problems are not unique to intrusion detection. They are a key reason why many diagnosis and troubleshooting systems rely on directed acyclic probabilistic graphical models such as Bayesian networks as well as rule mining instead of neural networks or support vector machines (SVMs) \cite{jensen2007bayesian}. %, and why rule-mining algorithms, e.g., the a-priori algorithm, decision trees, and Bayesian networks play such a large role in fault detection \cite{bayesian_nets_fault}.
To better resolve the model interpretation problem, intrusion detection should move to a more generic recognition framework, ideally providing additional diagnostic information.
%Unfortunately, a great deal of intrusion detection literature focuses on a binary ``intrusion'' or ``no intrusion'' model, not even labeling the type of intrusion. Even multi-class models that work well cannot provide useful diagnostics if all they do is output a decision score\todo[inline]{add corroborating citations}.

\subsubsection{Class Distributions are Gaussian}

The majority of probabilistic models cited in this paper assume that class distributions are single or multi-modal Gaussian mixtures in feature space. Although Gaussian mixtures often appear to capture class distributions, barring special cases, they generally fail to capture distribution tails \cite{kotz2000extreme}.
%\td{Is this a well-known abbreviation (I didn't know it before)? If not, spell it out}
% It is a well-known abbreviation for independent identically distributed

There are several different types of anomalies. \emph{Empirical anomalies} are anomalous with respect to a probabilistic model of training data, whereas \emph{idealized anomalies} are anomalous with respect to the joint distribution of training data. Provided good modeling, these two anomaly types are equivalent.
However, from an anomaly detection perspective, na\"ive Gaussian assumptions do not provide a good match between empirical and idealized anomalies because an anomaly is defined with respect to the tail of a joint distribution and tails tend to deviate from Gaussian \cite{kotz2000extreme}. Theorems from statistical extreme value theory (EVT) provide theoretically grounded functional forms for the classes of distributions that these class-tails can assume, provided that positive class outliers are \emph{process anomalies} -- rare occurrences from an underlying generating stochastic process -- and not noise exogenous to the process, e.g., previously unseen classes.

\subsection{An Open Set Recognition Framework}
\label{sec:open_set}
Accommodating new attack profiles and normative behavior models requires a method for diagnosing when query data are unsupported by previously seen training samples.
This diagnosis is commonly referred to as \emph{novelty detection} in the literature \cite{markou2003novelty}.
Specifically in IDS literature, novelty detection is often hailed as a means of detecting malicious samples with no prior knowledge.
The intuition is that by spanning the space of normal behavior during training, any novel behavior will be either an attack or a serious system error.
In practice however, it is infeasible to span the space of benign behavior.
Even on an individual host, ``normal'' benign behavior can change dramatically depending on configurations, software installations, and user turnover. The network situation is even more complicated.
Even for a medium size network, services, protocols, switches, routers, and topologies vary routinely.

We contend that novelty detection has a particularly useful role in the recognition of stealth malware, but the premise that we can span the entire benign input space apriori is as unrealistic as the premise that signatures of all known attacks solve the intrusion detection problem.
Instead, novelty detection should be treated in terms of what it does mathematically -- as a tool to recognize samples that are unsupported by the training data and to quantify the degree of confidence to ascribe a model's decision. Specifically, we propose treating the intrusion recognition task as an \emph{open set recognition} problem, performing discriminative multi-class recognition under the assumption of unknown classes at classification time.
Scheirer et al.~\cite{scheirer2013towards,scheirer2014probability} formalized the open set recognition problem as tradeoff between minimizing \emph{empirical risk} and \emph{open space risk} -- the risk of labeling unknown space -- or mathematically, the ratio of positively labeled space that should have been labeled ``unknown'' to the total extent of positively labeled space. A classifier that can arbitrarily control this ratio via an adjustable threshold is said to \textit{manage open space risk}.

%\td{Rephrase. At this point, IMHO it is not really clear, what ``open space`` is}
%the open space risk, $R_{\mathbb{O}}$ can be expressed as
%\begin{equation}
%R_{\mathbb{O}}(f) = \frac{\int_{\mathbb{O}} f(x) dx}{\int_{S_o} f(x)dx},
%\end{equation}
%\noindent
%where $f(x)$ is the recognition function; $f(x)=1$ when the positive class, $y$, is recognized and 0 otherwise. $\mathbb{O}$ is the domain of positively labeled open space, while $S_{O}$ is a ball containing both positively labeled open space and positive training samples.

Scheirer et al.~\cite{scheirer2013towards} extended the linear SVM objective to bound data points belonging to each known class by two parallel hyperplanes; one corresponding to a discriminative decision boundary, managing empirical risk, and the other limiting the extent of the classification, managing open space risk. Unfortunately, this ``slab'' model is not easily extensible to a non-linear classifier. In later work \cite{scheirer2014probability,jain2014multiclass}, they extended their solution to multi-class open set recognition problems using non-linear kernels, via posterior EVT calibration and thresholding of nonlinear SVM decision scores. EVT-calibrated one-class SVMs are used in conjunction with multi-class SVMs to simultaneously bound open-space risk and provide strong discriminative capability \cite{scheirer2014probability}. The authors refer to this combination as the \emph{W-SVM}. For our discussion, however, the theorems of Scheirer et al.~\cite{scheirer2014probability} are more interesting than the W-SVM itself. They prove that sum, product, min, and max fusions of compact abating probability (CAP) models again generate CAP models. Bendale and Boult \cite{bendale2015towards} extended this work to show that CAP models in linearly transformed spaces manage open space risk in the original input space. While these works are interesting, the formulations limit their application to probability distributions. Due to the need for efficient model updates in an intrusion recognition setting, enforcing probabilistic constraints on the recognition problem might be non-trivial, due to the need to re-normalize at each increment. We therefore generalize the theorems of Scheirer et al.~\cite{scheirer2014probability} as follows.

\begin{mythm}{{\bf Abating Bounds for Open Space Risk: }}
\label{thm:abating_bounds}
Assume a set of non-negative continuous bounded functions $\{g_1,\hdots,g_n\}$ where $g_k(x,x')$ decreases monotonically with $||x-x'||$. Then thresholding any positively weighted sum, product, min, or max fusion of a finite set of non-negative discrete or continuous functions $\{f_1,\hdots,f_n\}$ that satisfy $f_k(x,x') \leq g_k(x,x')\ \forall k$  manages open space risk, i.e., it allows us to constrain open space risk below any given $\epsilon$.
\end{mythm}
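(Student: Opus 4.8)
The plan is to reduce everything to a single structural fact about \emph{superlevel sets} and then invoke the open space risk measure of Scheirer et al.~\cite{scheirer2013towards}. For a recognition score $h$ and threshold $\tau$, the positively labeled region is the superlevel set $S_\tau(h) = \{x : h(x,\cdot) \geq \tau\}$, and a score manages open space risk exactly when, for every $\epsilon>0$, some threshold makes the labeled region meet open space in small measure relative to its total extent. The key reduction I would use is: if $S_\tau(h)$ can be forced to lie inside an arbitrarily small neighborhood of the reference points (the training data) by raising $\tau$, then its intersection with open space $\mathcal{O}$, which is by definition bounded away from those points, becomes negligible and the open space risk drops to zero.

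First I would establish the base case for the dominating functions. Fixing a reference $x'$, since $g_k(x,x')$ is non-negative, bounded, and decreases monotonically with $\|x-x'\|$, each superlevel set $\{x : g_k(x,x') \geq \tau\}$ is contained in a ball $B(x', r_k(\tau))$ whose radius $r_k(\tau)$ is non-increasing in $\tau$ and shrinks toward $0$ as $\tau \to \sup g_k$. This is precisely the compact-abating property, so each $g_k$ manages open space risk on its own. Next I would push the domination $f_k \le g_k$ through each fusion. All four operations are monotone in their arguments, so $\mathrm{Fuse}(f_1,\dots,f_n) \le \mathrm{Fuse}(g_1,\dots,g_n)$ pointwise, giving the inclusion $S_\tau(\mathrm{Fuse}(f)) \subseteq S_\tau(\mathrm{Fuse}(g))$; it then suffices to bound the $g$-side superlevel sets.

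For \textbf{max} and \textbf{min} this is immediate: $S_\tau(\max_k w_k g_k) = \bigcup_k \{g_k \ge \tau/w_k\}$ is a finite union of shrinking balls, and $S_\tau(\min_k w_k g_k)$ is contained in any single factor's ball. The \textbf{product} case reduces to \textbf{min}: if $\prod_k g_k \ge \tau$ then every factor satisfies $g_k \ge \tau/\prod_{j\ne k}\sup g_j$, so the region sits inside a single ball. For the \textbf{sum}, a pigeonhole step does the work: $\sum_k w_k g_k \ge \tau$ forces some $w_k g_k \ge \tau/n$, whence $S_\tau \subseteq \bigcup_k \{g_k \ge \tau/(n w_k)\}$, again a finite union of shrinking balls. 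In every case the fused superlevel set, and hence its $f$-subset, lies in a finite union of balls about the reference points whose radii vanish as $\tau$ increases toward the supremum of the fused $g$.

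Finally I would close the argument: choosing $\tau$ large enough that every such ball has radius below the radius that defines open space, the labeled region meets $\mathcal{O}$ in measure zero, so the open space risk is $0<\epsilon$; since the threshold is freely adjustable, open space risk can thus be constrained below any target. The main obstacle I anticipate is the \textbf{sum} fusion, where the fused function is peaked at several distinct references and is monotone in no single direction, so its boundedness is not inherited directly and must be recovered through the pigeonhole containment. A secondary technical point requiring care is the possibility that some $g_k$ decreases only to a positive limit rather than to $0$; there I must restrict attention to thresholds exceeding that limit, scaled by the weights and by $n$, to guarantee the containing balls are genuinely bounded, which is exactly where the boundedness hypothesis on the $g_k$ is used.
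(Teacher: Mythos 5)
Your proposal is correct in substance but follows a genuinely different route from the paper's. You argue geometrically: monotone decay plus boundedness puts each superlevel set $\{x : g_k(x,x')\ge\tau\}$ inside a ball about $x'$, monotonicity of the four fusion operators transfers the domination $f_k\le g_k$ to the fused score, and then max, min, product, and sum are each localized inside a finite union of such balls (pigeonhole for the sum, dividing out $\prod_{j\ne k}\sup g_j$ for the product), so raising $\tau$ confines the positively labeled region near the reference points. The paper instead works measure-theoretically: it zeroes the dominating functions below $\tau$, uses monotonicity to choose $\tau_\delta$ with $\int g'_k(x,x',\tau)\,dx\le\delta$, and then invokes linearity of the integral for sums (bound $k\delta$) and the uniform bound $\eta$ for products (bound $\eta^k\delta$), never needing the labeled region to be a union of balls. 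Your route buys two things: an explicit geometric picture of where positive labels can live, and a cleaner treatment of a step the paper elides --- thresholding the fusion, i.e.\ $\{\sum_k w_k f_k\ge\tau\}$, is not the same object as the fusion of individually thresholded functions $\sum_k f'_k$ that the paper integrates, and your containment $\{\sum_k w_k f_k\ge\tau\}\subseteq\bigcup_k\{f_k\ge\tau/(n w_k)\}$ (and the analogous product-to-min reduction) bridges exactly this gap. The paper's integral formulation, in turn, matches how open space risk is actually measured (as an integral ratio) and yields quantitative mass bounds without requiring the superlevel sets to shrink. One soft spot to repair in your write-up: monotone decay does not imply that the containing radii ``vanish as $\tau$ tends to $\sup g_k$'' --- $g_k$ may plateau at its supremum on a ball of fixed radius --- so you should either fall back to the paper's integral bound or note that for $\tau$ above the plateau value the superlevel set is empty, exactly parallel to the caveat you already raise for $g_k$ decaying only to a positive limit at infinity.
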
%\td{What does it mean to ``manage open space risk''?}
\begin{proof}

Given $\tau > 0$, define
\[ g'_k(x,x',\tau) \defeq \left\{\def\arraystretch{1.2}%
  \begin{array}{@{}c@{\quad}l@{}}
    g_k(x,x') & \text{if $g_k(x,x') > \tau$}\\
    0 & \text{otherwise.}\\
  \end{array}\right.
  \]
\[ f'_k(x,x',\tau) \defeq \left\{\def\arraystretch{1.2}%
  \begin{array}{@{}c@{\quad}l@{}}
    f_k(x,x') & \text{if $f_k(x,x') > \tau$}\\
    0 & \text{otherwise.}\\
  \end{array}\right.
  \]
This yields $f'_k(x,x') \leq g'_k(x,x')\ \forall k$.
Because of the monotonicity of $g_k$, for any fixed constant $\delta$, $\exists \tau_{\delta} \colon \int g'_k(x,x',\tau)\ dx \le \delta$.
Combining that with  $f'_k(x,x') \leq g'_k(x,x')$, yields $\int f'_k(x,x',\tau_{\delta})\ dx \le \delta$, thus limiting positively labeled area to $f_k(x,x')>\tau$, which manages open space risk.
Without loss of generality on $k$, it is easy to see that max and min fusion also manage open space risk.
Because summation is a linear operator: $$\int \sum_k f'_k(x,x',\tau)\ dx = \sum_k \int f'_k(x,x',\tau)\ dx.$$
Since a finite sum of finite values is finite, and  $\sum_k \int f'_k(x,x',\tau)\ dx \le k\delta$ it follows that thresholded positively weighted sums of $f'_k$ manage open space risk.
In addition, $\prod_k f'_k$ is bounded because $$g_k \Rightarrow \exists \eta \colon g'_k(x,x',\tau) < \eta \Rightarrow \int \prod_k g'_k(x,x',\tau) dx \le \eta^k\delta.$$%\td{Form this into a proper sentence}
This latter bound may not be tight, but is sufficient to show that $\prod_k g'_k(x,x',\tau)$ manages open space risk.
We have proven Theorem~\ref{thm:abating_bounds} without weights in the sums and products, but without loss of generality, non-negative weights can be incorporated directly into $g_k$.
\end{proof}

From Theorem~\ref{thm:abating_bounds}, it directly follows that many \emph{novelty detection} algorithms already in use by the IDS community provably manage open space risk and fit nicely into the open set recognition framework.
For example, Scheirer et al.~\cite{scheirer2014probability} prove that thresholding neighbor methods by distance manages open space risk.
Via such thresholding, clustering methods can be extended to an online regime, in which unknown classes $U_1, \hdots, U_\Omega$ are isolated \cite{markou2003novelty}.
Similarly, thresholded kernel density estimation (KDE) of ``normal'' data distributions has been successfully applied to the IDS domain.
Yeung and Chow \cite{yeung2002parzen} used kernel density estimates, in which they centered an isotropic Gaussian kernel on every data point $x_k$.
It is easy to prove that such estimates also manage open space risk.

\begin{mycorr}{{\bf Gaussian KDE Bounds for Open Space Risk.}}
\label{cor:kde}
Assume a Gaussian kernel density estimator where: $$p(x) = \frac{1}{N}\sum_{k=1}^{N} \frac{1}{(2\pi \sigma^2)^{D/2}}exp\left( \frac{||x-x_k||^2}{2\sigma^2}\right).$$ Thresholding $p(x)$ by $0 < \tau \leq 1$ manages open space risk.
\end{mycorr}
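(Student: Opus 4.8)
The plan is to obtain this corollary as an immediate specialization of Theorem~\ref{thm:abating_bounds}. The central observation is that the Gaussian kernel density estimator is \emph{already} in the exact form covered by that theorem: a positively weighted sum of per-point functions, each of which individually satisfies the hypotheses imposed on the dominating functions $g_k$. Rather than constructing separate bounding functions, I would therefore let each Gaussian kernel play the role of both $f_k$ and its own dominating $g_k$.

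Concretely, I would first set $f_k(x,x') \defeq \frac{1}{(2\pi\sigma^2)^{D/2}}\exp(-||x-x'||^2/(2\sigma^2))$ and take $g_k = f_k$, so that the required domination $f_k \leq g_k$ holds with equality. Then I would verify that each such $g_k$ meets the three standing assumptions of Theorem~\ref{thm:abating_bounds}: it is non-negative and continuous, being a composition of continuous maps; it is bounded above by its peak value $\frac{1}{(2\pi\sigma^2)^{D/2}}$, attained at $x = x'$; and, writing it as a function of the scalar $r = ||x-x'||$, it equals $\frac{1}{(2\pi\sigma^2)^{D/2}}\exp(-r^2/(2\sigma^2))$, which is strictly decreasing in $r$, so it decreases monotonically with $||x-x'||$ as required.

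Second, I would observe that $p(x) = \frac{1}{N}\sum_{k=1}^{N} f_k(x,x_k)$ is precisely a positively weighted sum of these functions, with uniform weights $w_k = 1/N > 0$. Theorem~\ref{thm:abating_bounds} asserts that thresholding any positively weighted sum of functions dominated by such $g_k$ manages open space risk, and since non-negative weights can be absorbed directly into the $g_k$ (as noted at the close of that proof), the hypotheses are satisfied verbatim. Applying the theorem with threshold $\tau$ then shows that the positively labeled region $\{x : p(x) > \tau\}$ has measure controllable below any $\epsilon$, which is exactly the statement that thresholding $p$ manages open space risk.

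The step requiring the most care is not any calculation but the interpretation of the stated range $0 < \tau \leq 1$. Theorem~\ref{thm:abating_bounds} needs only $\tau > 0$ to guarantee a finite, measure-controllable positively labeled region, so the positivity of $\tau$ is the essential hypothesis inherited from the theorem. I would therefore explain the upper bound $\tau \leq 1$ as ensuring a non-degenerate (non-empty) known-class region in the intended probabilistic regime rather than as a constraint needed for the abating argument itself: for small $\sigma$ the peak density $\frac{1}{(2\pi\sigma^2)^{D/2}}$ exceeds $1$, so any $\tau \leq 1$ leaves a genuine bounded known-region while still abating. I expect this interpretive remark, not the mechanical verification, to be the only subtle point.
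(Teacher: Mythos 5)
Your proof is correct and takes essentially the same route as the paper's: each Gaussian kernel serves as its own abating bound (taking $g_k = f_k$), and the positively-weighted-sum clause of Theorem~\ref{thm:abating_bounds} then yields the result, with the $1/N$ weights handled exactly as the theorem's closing remark on non-negative weights permits. The only differences are cosmetic --- you restore the minus sign in the exponent (a typo in the paper's statement of the kernel) and add an interpretive remark on the range $0 < \tau \leq 1$ that the paper omits; neither changes the argument.
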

\begin{proof}
When $N$ %\td{Until here, $N$ is the (unknown) number of unknown classes $U_N$. Please choose a different letter here; note that in theorem \ref{thm:abating_bounds} above, you have k=1...n (which is not a good choice, either)}
is the total number of points, each kernel is given by: $$f_k(x,x_k) = \frac{1}{N} \frac{1}{(2\pi \sigma^2)^{D/2}}exp\left(\frac{||x-x_k||^2}{2\sigma^2}\right).$$
By Theorem~\ref{thm:abating_bounds}, we can treat $f_k(x,x_k)$ as its own bound. When thresholded, $f_k(x,x_k)$ will finitely bound open space risk. The kernel density estimate: $$p(x) = \sum_{k=1}^{N}f_k(x,x_k) = \frac{1}{N}\sum_{k=1}^{N} \frac{1}{(2\pi \sigma^2)^{D/2}}exp\left( \frac{||x-x_k||^2}{2\sigma^2}\right)$$ also bounds open space risk because it is a positively weighted sum of functions that satisfy the bounding criteria in Theorem~\ref{thm:abating_bounds}.
\end{proof}

Thresholded nearest neighbor approaches and KDE require selection of a meaningful $\sigma$, and distance/probability threshold.
They also implicitly assume local isotropy in the feature space, which highlights the need for a meaningful feature space representation.

Neighbor and kernel density estimators are nonparametric models, but several parametric novelty detectors in use by the IDS community also provably manage open space risk.
Thresholding density estimates from Gaussian mixture models (GMMs) is a popular parametric approach to novelty detection with a similar functional form to KDE \cite{alizadeh2015traffic,fan2013anomaly,gruhl2015building,lam2015outlier,yamanishi2004line}.
For GMMs, however, the input data $x$ is assumed distributed as a superposition of a \textit{fixed} number of Gaussians: $$p(x) = \sum_k c_k N(x|\mu_k,\Sigma_k)$$ such that $\sum_k c_k = 1$.
Unlike nonparametric Gaussian KDE, in which $\mu$ and $\sigma$ are selected apriori, the Gaussians in a GMM are fit via an expectation maximization technique similar to that used by HMMs.
By generalizing Corollary~\ref{cor:kde}, we can prove that thresholding GMMs probabilities manages open space risk.
When GMMs integrate to one, %\td{How can GMMs integrate to one? Do you mean that the area under the GMM integrates to one? Please explain? I thought that this is a general characteristic of a GMM, so why do you write "When GMMs integrate to one"?}
they are also CAP models. Although this constraint often holds, it is not required.

\begin{mycorr}{{\bf GMM Bounds for Open Space Risk.}}
\label{cor:GMM}
Assume a Gaussian mixture model. The thresholded density estimate from this model bounds open space risk.
\end{mycorr}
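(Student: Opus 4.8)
The plan is to reduce this corollary to Corollary~\ref{cor:kde} and Theorem~\ref{thm:abating_bounds} by observing that a GMM density is nothing more than a finite, positively weighted sum of Gaussian components. It therefore suffices to exhibit, for each component $c_k N(x \mid \mu_k, \Sigma_k)$, a dominating function $g_k$ of the form required by Theorem~\ref{thm:abating_bounds}, namely a non-negative, continuous, bounded function that decreases monotonically with $\|x - \mu_k\|$. Once each component is bounded by such a $g_k$, the theorem's closure under positively weighted summation immediately gives that thresholding $p(x) = \sum_k c_k N(x \mid \mu_k, \Sigma_k)$ manages open space risk, and this holds whether or not the mixing weights satisfy $\sum_k c_k = 1$ (so the CAP/normalization constraint is not actually needed).

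First I would set $f_k(x,\mu_k) = c_k N(x \mid \mu_k, \Sigma_k)$ and absorb the non-negative weight $c_k$ into the bound, exactly as in the closing remark of the proof of Theorem~\ref{thm:abating_bounds}. The remaining work is to dominate the anisotropic Gaussian $N(x \mid \mu_k, \Sigma_k)$ by an isotropic one. Since each covariance $\Sigma_k$ is symmetric positive definite, $\Sigma_k^{-1}$ has a smallest eigenvalue $1/\lambda_{\max}^{(k)}$, where $\lambda_{\max}^{(k)} > 0$ is the largest eigenvalue of $\Sigma_k$. The Rayleigh-quotient inequality then yields the Mahalanobis-to-Euclidean bound
\[
(x-\mu_k)^{\top}\Sigma_k^{-1}(x-\mu_k) \ \ge\ \frac{\|x-\mu_k\|^2}{\lambda_{\max}^{(k)}},
\]
and hence
\[
N(x \mid \mu_k, \Sigma_k) \ \le\ \frac{1}{(2\pi)^{D/2}\,|\Sigma_k|^{1/2}}\exp\!\left(-\frac{\|x-\mu_k\|^2}{2\,\lambda_{\max}^{(k)}}\right) \ =:\ g_k(x,\mu_k).
\]
Each $g_k$ is a scaled isotropic Gaussian centered at $\mu_k$, so it is non-negative, continuous, bounded, and strictly decreasing in $\|x-\mu_k\|$, satisfying the hypotheses of Theorem~\ref{thm:abating_bounds}.

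With this domination established, I would invoke Theorem~\ref{thm:abating_bounds} directly: $p(x)$ is a finite positively weighted sum of the $f_k$, each bounded above by an admissible $g_k$, so thresholding $p(x)$ constrains open space risk below any prescribed $\epsilon$. This is precisely the generalization of Corollary~\ref{cor:kde} promised in the text, with the equal-weight isotropic kernels there replaced by general weighted Gaussian components and the fixed kernel variance $\sigma^2$ replaced by the worst-case directional variance $\lambda_{\max}^{(k)}$.

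The hard part will be the anisotropy introduced by full covariance matrices: unlike the isotropic KDE kernels of Corollary~\ref{cor:kde}, a general Gaussian component does \emph{not} decrease monotonically in the Euclidean norm $\|x-\mu_k\|$, so it cannot be fed to Theorem~\ref{thm:abating_bounds} as stated. The eigenvalue bound above is the crux, converting each Mahalanobis-shaped level set into a dominating spherical one. A minor technicality to verify is that the argument is meaningful only for genuinely positive-definite $\Sigma_k$, so that $\lambda_{\max}^{(k)}$ is finite and each component is a proper density; degenerate covariances would have to be excluded or treated by a limiting argument.
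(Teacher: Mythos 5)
Your proposal is correct, and it is in fact more careful than the paper's own proof. The paper argues that each mode is ``its own abating bound,'' but it writes that mode as $f_k(x,\mu) = c_k e^{-\frac{1}{2}(x-\mu)^T(x-\mu)}$ --- i.e., it silently replaces the general covariance $\Sigma_k$ in the corollary's statement with the identity, so the monotone-decrease-in-$\|x-x'\|$ hypothesis of Theorem~\ref{thm:abating_bounds} holds trivially. You correctly identified that this step fails verbatim for full-covariance components: an anisotropic Gaussian is not a function of the Euclidean distance $\|x-\mu_k\|$ alone, hence not monotone in it, so it cannot serve as its own bound $g_k$. Your Rayleigh-quotient domination
\[
(x-\mu_k)^{\top}\Sigma_k^{-1}(x-\mu_k) \ \ge\ \frac{\|x-\mu_k\|^2}{\lambda_{\max}^{(k)}}
\quad\Longrightarrow\quad
N(x \mid \mu_k,\Sigma_k) \ \le\ \frac{1}{(2\pi)^{D/2}|\Sigma_k|^{1/2}}\exp\!\left(-\frac{\|x-\mu_k\|^2}{2\lambda_{\max}^{(k)}}\right)
\]
supplies exactly the admissible isotropic bound the theorem requires, after which both arguments coincide: absorb the non-negative weights $c_k$ into the bounds and invoke closure under positively weighted sums (and both you and the paper correctly note that $\sum_k c_k = 1$ is not needed). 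What your route buys is that the corollary actually holds as stated, for arbitrary positive-definite $\Sigma_k$, rather than only for the isotropic case the paper's proof implicitly treats; your caveat about degenerate covariances is also apt, since $\lambda_{\max}^{(k)}<\infty$ and positive-definiteness are genuinely needed for the componentwise bound to define a proper dominating function.
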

\begin{proof}
By Theorem~\ref{thm:abating_bounds}, the $k$th mode of a GMM: $$f_k(x,\mu) = c_k e^{-\frac{1}{2}(x-\mu)^T(x-\mu)}$$ is its own abating bound.
Because the superposition of all modes is a sum of non-negatively weighted functions, each with an abating bound, GMMs have an abating bound.
Thresholding GMM density estimates therefore manages open space risk.
\end{proof}

Note that Corollary~\ref{cor:GMM} only holds for the density estimate from an individual GMM, and not necessarily for all  recognition functions that leverage multiple GMMs. For example, the log ratio of probabilities of two GMM estimates, $\log\frac{p_1(x)}{p_2(x)}$ does not bound open space risk when $\frac{p_1(x)}{p_2(x)}$ diverges as either $p_1(x)$ or $p_2(x) \rightarrow 0$. Similarly a recognition function $p_1(x) > p_2(x)$ does not provably manage open space risk because $p_1(x) > p_2(x)$ can hold over unbounded $x$.

There is a strong connection between GMMs and the aforementioned HMMs.
Similarly to HMMs, GMMs can also be viewed as discrete latent variable models.
Given input data $x$ and  multinomial random variable $z$, whose value corresponds to the generating Gaussian, the joint distribution factors according to the product rule: $p(x,z) = p(z)p(x|z)$.
$p(z)$ is determined by the Gaussian mixture coefficients $c_k$.
Therefore, the factorization of GMMs can be viewed as a simplification of HMMs, with a factor graph, in which latent variables are not connected and, therefore, treated independent of sequence.

This raises two questions: first, can HMMs can be used for novelty detection? And second, do HMMs manage open space risk?
Indeed, HMMs \textit{can} be used for novelty detection on sequential data by running inference on sequences in the training set and thresholding the estimated joint probability (or log of the estimated joint probability) outputs.
This approach was taken by Yeung et al.~\cite{yeung2003host} for host-based intrusion detection using system call sequences.
To assess, whether HMMs manage open space risk, we need to consider the form of an HMM's estimated joint distribution. For an $N$-length sequence, an HMM factors as: $$p(x_1,..,x_N,z_1,..,z_N) = p(z_1) \prod_{n=2}^{N} p(z_n | z_{n-1}) \prod_{n=1}^N p(x_n | z_n)$$
where $x_1,\hdots,x_N$ are observations and $z_1,\hdots,z_N$ are latent variables. This leads to Corollary~\ref{cor:HMM}.

\begin{mycorr}{{\bf HMM Bounds for Open Space Risk.}}
\label{cor:HMM}
Assume HMM factors $p(z_1)$, $p(z_n | z_{n-1})$, and $p(x_n | z_n)$ satisfy the bounding constraints in Theorem~\ref{thm:abating_bounds}. Then thresholding the output of a forward pass of an HMM bounds open space risk.
\end{mycorr}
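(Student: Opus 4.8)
The plan is to reduce the claim to Theorem~\ref{thm:abating_bounds} by exploiting the multiplicative factorization of the HMM joint together with the additive structure of the forward pass. First I would note that a forward pass returns the marginal likelihood of the observed sequence, $$p(x_1,\hdots,x_N) = \sum_{z_1,\hdots,z_N} p(z_1)\prod_{n=2}^N p(z_n\mid z_{n-1}) \prod_{n=1}^N p(x_n\mid z_n),$$ where the sum ranges over the latent paths. Because the state alphabet is finite and $N$ is finite, this is a \emph{finite} sum of joint terms, so it suffices to show that each joint term manages open space risk and then close under finite positively-weighted summation.

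Second, I would treat a single latent path. Viewing the whole sequence $\mathbf{x}=(x_1,\hdots,x_N)$ as the query point, the transition and initial factors $p(z_1)$ and $p(z_n\mid z_{n-1})$ are probabilities and hence bounded above by $1$; bounding them this way leaves $$p(x_1,\hdots,x_N,z_1,\hdots,z_N) \le \prod_{n=1}^N p(x_n\mid z_n).$$ By hypothesis each emission factor $p(x_n\mid z_n)$ is dominated by a non-negative function that abates monotonically with distance, so the right-hand side is a finite product of abating-bounded functions. Theorem~\ref{thm:abating_bounds} then certifies that this product---and therefore each joint term---manages open space risk. The finitely many unit-weighted terms are then combined with the sum-fusion clause of Theorem~\ref{thm:abating_bounds}, giving that the marginal $p(x_1,\hdots,x_N)$ manages open space risk; thresholding it at any $\tau>0$ bounds open space risk below a prescribed $\epsilon$.

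The hard part is reconciling the HMM factorization with the precise hypotheses of Theorem~\ref{thm:abating_bounds}, whose functions $f_k(x,x')$ and $g_k(x,x')$ all share a single query argument and whose abatement is stated in $||x-x'||$. Two mismatches must be handled. First, the transition and initial factors are \emph{constant} in the observations and so cannot themselves abate; my remedy is the bound-by-one step above, which absorbs them as bounded non-negative weights exactly as the weight remark at the end of the theorem's proof permits, so that only the emission factors are required to supply the abatement. Second, the emission factors act on \emph{different} coordinates $x_n$ of the sequence rather than on a common argument; I would address this by checking that the area estimate underlying the theorem still holds for such a product, either by a coordinatewise (Fubini) factorization of the integral of the dominating product or, more directly, by observing that the continuous marginal vanishes as $||\mathbf{x}||\to\infty$ and hence has a super-level set of finite, controllable measure. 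Once these two points are settled, the sum and product fusions are mechanical invocations of Theorem~\ref{thm:abating_bounds}.
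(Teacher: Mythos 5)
Your proof is correct, but it takes a genuinely more careful route than the paper's. The paper's proof is a one-line invocation of the product-fusion clause of Theorem~\ref{thm:abating_bounds}: it reads the forward-pass output as the joint $p(x_1,\hdots,x_N,z_1,\hdots,z_N)$ of a single latent path (consistent with the thresholded quantity $-\log p(O_1,\hdots,O_N,z_1,\hdots,z_N)/N$ used earlier in Sec.~\ref{sec:feature_state}), and simply asserts that this product of factors, each assumed to satisfy the bounding constraints, manages open space risk by the theorem. You instead work with the marginal likelihood $p(x_1,\hdots,x_N)$, decompose it over the finitely many latent paths, and close under sum fusion---which is arguably the more faithful reading of ``output of a forward pass.'' More importantly, you confront two real gaps that the paper's argument glosses over: the initial and transition factors $p(z_1)$ and $p(z_n\mid z_{n-1})$ are constant in the observations, so they cannot literally satisfy an abating bound (a nonzero constant $g$ has thresholded integral $\infty$, defeating the $\tau_\delta$ step in the theorem's proof), and the emission factors act on distinct coordinates $x_n$ rather than the single shared query argument the theorem posits. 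Your remedies---absorbing the observation-independent factors as bounded non-negative weights via the weight remark at the end of the theorem's proof, and restoring the finite-measure estimate for the multi-coordinate emission product via a Fubini factorization (or the observation that a continuous function vanishing at infinity has bounded super-level sets)---repair both defects and in effect prove a stronger statement: only the emission densities need to abate. What the paper's version buys is brevity; what yours buys is an argument that actually matches the hypotheses of Theorem~\ref{thm:abating_bounds} as stated, at the cost of an exponentially large (but still finite) number of summands in the path decomposition, which is harmless since sum fusion only requires finiteness.
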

\begin{proof}
Under the assumption that $p(z_1)$, $p(z_n | z_{n-1})$, and $p(x_n | z_n)$ satisfy the bounding constraints in Theorem~\ref{thm:abating_bounds}, then the HMM factorization above is a product of these functions, which by Theorem~\ref{thm:abating_bounds} manages open space risk.
\end{proof}

Corollary~\ref{cor:HMM} states that under certain assumptions on the form of the factors in HMMs, an HMM will provably manage open space risk. Unfortunately, it is not immediately clear how to enforce such a form, so many HMMs, including those in \cite{yeung2003host} are not proven to manage open space risk and may ascribe known labels to infinite open space. Formulating HMMs that manage open space risk and provide adequate modeling of data is an important topic, which we leave for future research.

GMMs and HMMs are linear models. One-class SVMs are popular nonlinear models, which have been successfully applied to detecting novel intrusions \cite{amer2013enhancing,yang2015adaptive,heller2003one,wang2004anomaly,li2003improving,perdisci2006using}. %One-class SVMs optimize the maximum-margin hyperplane separating class points from the origin, generally in a radial-basis function (RBF) transformed space.
In their Theorem 2, Scheirer et al.~\cite{scheirer2014probability} prove that one-class SVM density estimators \cite{scholkopf2001estimating} with a Gaussian radial-basis function (RBF) kernel manage open space risk. The decision functions for these machines are given by  $\sum_k \alpha_k K(x,x_k)$, where $K(x,x_k)$ is the kernel function and $\alpha_i$ are the Lagrange multipliers. %\td{Here, the index is $i$. Should that be anything ales, e.g., $k$?}
It is important to note that non-negative $\alpha_k$ are required to satisfy Theorem 1 in \cite{scheirer2014probability}, and that multi-class RBF SVMs and one-class SVMs under different objective functions are not proven to manage open space risk.

\subsection{Open World Archetypes for Stealth Malware Intrusion Recognition}
\label{sec:adaptive}

The open set recognition framework introduced in Sec.~\ref{sec:open_set} can be \emph{incorporated} into existing intrusion recognition algorithms.
This means that there is no need to abandon closed set algorithms in order to manage open space risk, provided that they are fused with open set recognition algorithms.
Closed set techniques may be excellent solutions when they are well supported by training data, but open set algorithms are required in order to ascertain whether the closed set decisions are meaningful.
Therefore, the open set problem can be addressed by using an algorithm that is inherently open set for novelty detection and rejecting any closed set decision as unknown if its support is below the open set threshold.
A model with easily interpreted diagnostic information, e.g., a decision tree or Bayesian network, can be fused with the open set algorithm as well, in order to decrease response/mitigation times and to compensate for other discriminative algorithms that are not so readily interpretable.
Note that many of the algorithms proposed by Scheirer et al. are discriminative classifiers themselves, but underperform the state of the art in a purely closed setting.

The interpretation of a thresholded open set decision is trivial, assuming that the recognition function represents some sort of density estimation. For a query point, if the maximum density with respect to each class is below the open set threshold, $\tau$, then the class is labeled as ``unknown''. Otherwise, the query sample is ascribed the label corresponding to the class of maximum density.
Under the open set formulation, the degree of \emph{openness} can be controlled by the value of $\tau$.
The desired amount of openness will vary depending on the algorithm and the application's optimal precision/recall requirements.
For example, a high security non-latency sensitive virtualized environment that is administered by many security experts can label many examples as unknown and interpose state frequently for an expert opinion.
Systems that are latency sensitive, but for which potential harm of intrusion is relatively low, might have much looser open space bounds.

Note that an open set density estimator can be applied with or without normalization to a probability distribution.
However, we can only prove that it manages open space risk if the estimator's decision function satisfies Theorem~\ref{thm:abating_bounds}.

Open set algorithms can also be applied under many different feature space transformations.
When open set algorithms are fused with closed set algorithms, the two need not necessarily operate in the same feature space.
Research has demonstrated\cite{rudd2015extreme,bendale2015towards} the effectiveness of the open set classification framework in machine-learnt feature spaces.
Bendale and Boult \cite{bendale2015towards} bounded a nearest class mean (NCM) classifier in a metric-learnt transformed space, an algorithm they dubbed nearest non-outlier (NNO).
They also proved that under a linear transformation, open space risk management in the transformed feature space will manage open space risk in the original input space.
Rudd et al.~\cite{rudd2015extreme} formulated extreme value machine (EVM) classifiers to perform open set classification in a feature space learnt from a convolutional neural network.
The EVM formulation performs a kernel-like transformation, which supports variable data bandwidths, that implicitly transforms \emph{any} feature space to a probabilistically meaningful representation.
This research indicates that open set algorithms can support meaningful feature space transformations, although what constitutes a ``good'' feature space depends on the problem and classifier in question.

Bendale and Boult\cite{bendale2015towards} and Rudd et al.~\cite{rudd2015extreme} also extended open set recognition algorithms to an online regime, which supports incremental model updates.
They dubbed this recognition scenario \emph{open world recognition}, referring to online open set recognition.
The incremental aspects of this work are in a similar vein to other online intrusion recognition techniques \cite{lane1998approaches,karthick2012adaptive,wang2004anomalous,zhong2007clustering,cannady2000applying,hu2014online,wang2013concept},
which, given a batch of training points $X_{t}$ at time $t$, aim to update the prior for time $t+1$ in terms of the posterior for time $t$, so that $P_{t+1}(\theta_{t+1}) \leftarrow  P_{t}(\theta_{t}|X_{t},T_{t})$, where $T$ is the target variable, $P$ is a recognition function, and $\theta$ is a parameter vector. If $P$ is a probability, a Bayesian treatment can be adopted, where:
{\small
$$
P_{t + 1}(\theta_{t + 1}|X_{t + 1},T_{t + 1}) = \frac{P_{t+1}(T_{t+1}|\theta_{t+1},X_{t+1})P_{t}(\theta_{t}|X_{t},T_{t})}{P_{t+1}(T_{t+1})}
$$%
}%
With a few exceptions, however, recognition functions in the incremental learning intrusion recognition literature generally do not satisfy Theorem~\ref{thm:abating_bounds}, and are not proven to manage open space risk. This means that they are not necessarily true \textit{open world} classifiers.

Moreover, none of the work in \cite{lane1998approaches,karthick2012adaptive,wang2004anomalous,zhong2007clustering,cannady2000applying,hu2014online,wang2013concept} addresses the pressing need to \emph{prioritize labeling of detected novel} data for incremental training.
This is problematic, because the objective of online learning is to adapt a model to recognize new attack variations and benign patterns -- insights that would otherwise be perishable within a useful time horizon.
When intrusion recognition subsystems exhibit high recall rates, however, updating the model with new attack signatures is \emph{much more vital} than updating the model with novel benign profiles.
Since labeling capacity is often limited by the number of knowledgeable security practitioners, we contend that the ``optimal'' labeling approach is to greedily rank the unknown samples in terms of their likelihood of being associated with known malicious classes. Given bounded radially abating functions from Theorem~\ref{thm:abating_bounds}, i.e., open set decision functions, we can do just that, prioritizing labeling by some malicious likelihood criterion (MLC).
%Note that the MLC need not be open set itself, since it is used only to rank unknown samples by their probabilities of exhibiting malicious behaviors. Therefore, likelihood ratio and comparison fusions can be valid MLCs without violating Theorem~\ref{thm:abating_bounds}.

The intuition behind the MLC ranking is as follows: from the discussion in Sec.~\ref{sec:signatures_heuristics}, malwares often share components: even for vastly different malwares, similar components yield similar patterns in feature space. Although minor code overlap will not necessarily cause (mis)categorizations of malware classes, it may cause novel malware classes to be close enough to known ones in feature space that they are ranked higher by MLC criterion than most novel benign samples. Label prioritization by  MLC ranking could, therefore, improve resource allocation of security professionals and  dramatically reduce the amount of time that stealth malwares are able to propagate unnoticed. Of course, other considerations besides MLC are relevant to a truly ``optimal'' ranking, including difficulty of diagnosis and likely degree of harm, but these properties are difficult to ascertain autonomously.

%The incremental aspects of this work are similar in spirit to work on novelty detection in data streams \cite{masud2011classification}.
A final useful aspect of the open world intrusion recognition framework is that it is not confined to naive Gaussian assumptions.
Mixtures of Gaussians can work well for modeling densities, but tend to deteriorate at the distribution tails, because the tails of the models tend toward tails from unimodal Gaussians, whereas the tails of the data distributions generally do not.
%which often follow unimodal Gaussians\td{Are you sure about that? I thought tails are Weibull!}.
For recognition problems, however, accurate modeling of tail behavior is important, in fact, more important than accurate modeling of class centers \cite{scheirer2010robust,scheirer2012multi}.
To this end, researchers have turned to statistical extreme value theory techniques for density estimation, and open world recognition readily accommodates them.
Both \cite{jain2014multiclass} and \cite{scheirer2014probability} apply EVT modeling to open set recognition scenarios based posterior fitting of point distances to classifier decision boundaries, while \cite{rudd2015extreme} incorporated EVT calibration into a generative model, which performs loose density estimation as a mixture of EVT distributions.
Importantly, the EVT distributions employed by Rudd et al.~\cite{rudd2015extreme}, unlike Gaussian kernels in an SVM or KDE application, are variable bandwidth functions of the data. They are also are directly derived from EVT and incorporate higher-order statistics which Gaussian distributions cannot (e.g., skew, curtosis). Finally, they provably manage open space risk.%\td{Rephrase or -- better -- turn it into two sentences in order to clearly show, which characteristics belong to which algorithm}

\ifCLASSOPTIONcaptionsoff
  \newpage
\fi

\section{Conclusions and Open Issues}
\label{sec:conclusion}

Stealth malwares are a growing threat because they exploit many system features that have legitimate uses and they can propagate undetected for long periods of time.
We, therefore, felt the need to provide the first academic survey specifically focused on malicious stealth technologies and mitigation measures.
We hope that security professionals in both academic and industrial environments can draw on this work in their research and development efforts.
Our work also highlights the need to combine countermeasures that aim to protect the integrity of system components with more generic machine learning solutions.
We have identified flawed assumptions behind many machine learning algorithms and proposed steps to improve them based on research from other recognition domains.
We encourage the security community to consider these suggestions in future development of intrusion recognition algorithms.

%''open issues section''
While we are the first to propose a mathematically formalized \textit{open world} approach to intrusion recognition, there are open issues that must be addressed through experimentation and implementation, including how tightly to bound open space risk, and more generally how to determine the openness of the problem in operational scenarios. An overly aggressive bound may actually degrade performance for problems that are predominantly closed-set, prioritizing the minimization of open space risk over the minimization of empirical risk. Another important consideration is the cost of misclassifying an unknown sample as belonging to a known class, which depends in part on the operational resources available to label novel classes, and in part on the degree of threat expected of novel classes. These tradeoffs are important subjects for experimental analysis and operational deployment of open world anti-malware systems.

For benchmarking and experimentation, good datasets that support open world protocols are vital for future research. While some effort has been made, e.g., \cite{creech2014semantic,creech2013generation,creech2014developing}, there are few modern publicly available datasets for intrusion detection, specifically of stealth malwares. We believe that the collection and distribution of modernized and realistic publicly available datasets containing stealth malware samples are vital to the furtherance of academic research in the field. While many corporate security companies have good reasons for keeping their datasets private, a guarded increase in collusion with academia to allow extended -- yet still restricted -- sharing of data is in the best interest of all parties in developing better stealth malware countermeasures.

We have proven that a number of existing algorithms currently used in the intrusion recognition domain already satisfy the requirements of an open set framework, and we believe that they should be leveraged and extended both in theory and in practice to address the flawed assumptions behind many existing algorithms that we detailed in Sec.~\ref{sec:flawed_assumptions}.
Adopting an open world mathematical framework obviates the assumptions that \textit{intrusions are closed set}, \textit{anomalies imply class labels}, and that \textit{static models are sufficient}.
How to appropriately address the other assumptions requires further research.
Although some progress has been made in open world algorithms, the question, how to obtain a nicely discriminable feature space while accommodating a readily interpretable model, merits future research.
Finally, how to model class distributions without Gaussian assumptions demands further mathematical treatment -- statistical extreme value theory is a good start, but it has yet to be gracefully defined how select distributional tail boundaries.
Also, with the exception of special cases it is still not well formalized, how to model the remainder of the distribution (the non-extreme values) of non-Gaussian data.

\begin{table*}[htp]
  \scriptsize
  \centering
  \section*{Appendix: API Calls, Data Structures, and Registry Keys}
  \vspace{1em}
  \renewcommand\arraystretch{1.5}
  \setlength\tabcolsep{1em}
  \begin{tabularx}{.9\textwidth}{c|X}
    \bf \Name{Windows} API entry  & \bf Documentation \\\hline
    \Windows{CallNextHookEx} & Passes the hook information to the next hook procedure in the current hook chain. A hook procedure can call this function either before or after processing the hook information.\\
    \Windows{CreateRemoteThread} & Creates a thread that runs in the virtual address space of another process.\\
    \Windows{DeviceIoControl} & Sends a control code directly to a specified device driver, causing the corresponding device to perform the corresponding operation.\\
    \Windows{DllMain} & An optional entry point into a dynamic-link library (DLL). When the system starts or terminates a process or thread, it calls the entry-point function for each loaded DLL using the first thread of the process. The system also calls the entry-point function for a DLL when it is loaded or unloaded using the \Windows{LoadLibrary} and \Windows{FreeLibrary} functions.\\
    \Windows{FindFirstFile} & Searches a directory for a file or subdirectory with a name that matches a specific name (or partial name if wildcards are used).\\
    \Windows{FindNextFile} & Continues a file search from a previous call to the \Windows{FindFirstFile}, \Windows{FindFirstFileEx}, or \Windows{FindFirstFileTransacted} functions.\\
     \Windows{GetProcAddress} & Retrieves the address of an exported function or variable from the specified dynamic-link library (DLL).\\
     \Windows{LoadLibrary} & Loads the specified module into the address space of the calling process. The specified module may cause other modules to be loaded.\\
     \Windows{NtDelayExecution} & \emph{Undocumented export of \Windows{ntdll.dll}.}\\
     \Windows{NTQuerySystemInformation} & Retrieves the specified system information.\\
     \Windows{OpenProcess} & Opens an existing local process object.\\
     \Windows{PsSetLoadImageNotifyRoutine} & The \Windows{PsSetLoadImageNotifyRoutine} routine registers a driver-supplied callback that is subsequently notified whenever an image is loaded (or mapped into memory).\\
     \Windows{PspCreateProcessNotify} & \emph{Completely undocumented function.}\\
     \Windows{SetWindowsHookEx} & Installs an application-defined hook procedure into a hook chain. You would install a hook procedure to monitor the system for certain types of events. These events are associated either with a specific thread or with all threads in the same desktop as the calling thread.\\
     \Windows{SleepEx} & Suspends the current thread until the specified condition is met.\\
     \Windows{WriteProcessMemory} & Writes data to an area of memory in a specified process. The entire area to be written to must be accessible or the operation fails.\\
     \Windows{ZwQuerySystemInformation} & Retrieves the specified system information.\\\hline
     \Windows{EAT} & The \Windows{Export Address Table} is a table where functions exported by a module are placed so that they can be used by other modules.\\ 
     \Windows{EPROCESS} & The \Windows{EPROCESS} structure is an opaque executive-layer structure that serves as the process object for a process.\\
     \Windows{IAT} & The \Windows{Import Address Table} is where the dynamic linker writes addresses of loaded modules such that each entry points to the memory locations of library functions.\\
     \Windows{IDT} & The \Windows{Interrupt Descriptor Table} is a kernel-level table of function pointers to callbacks that are called upon interrupts / exceptions.\\
     \Windows{IRP} & \Windows{I/O Request Packets} are used to communicate between device drivers and other areas of the kernel.\\
     \Windows{KTHREAD} & The \Windows{KTHREAD} structure is an opaque kernel-layer structure that serves as the thread object for a thread.\\
     \Windows{KeServiceDescriptorTable} & Contains pointers to the \Windows{SSDT} and \Windows{SSPT}. It is an undocumented export of \Windows{ntoskrnl.exe}.\\
     \Windows{SSDT} & The \Windows{System Service Descriptor Table} is a kernel-level dispatch table of callbacks for system calls.\\
     \Windows{SSPT} & The \Windows{System Service Parameter Table} is a kernel-level table containing sizes (in bytes) of arguments for \Windows{SSDT} callbacks.\\\hline
     \Windows{AppInit\_DLLs} & Space or comma delimited list of DLLs to load.\\
     \Windows{LoadAppInit\_DLLs} & Globally enables or disables \Windows{AppInit\_DLLs}.\\\hline
     \Windows{kernel32.dll} & Exposes to applications most of the \Name{Win32} base APIs, such as memory management, input/output (I/O) operations, process and thread creation, and synchronization functions.\\
     \Windows{ntdll.dll} & Exports the \Name{Windows Native API}. The \Name{Native API} is the interface used by user-mode components of the operating system that must run without support from \Name{Win32} or other API subsystems.\\
     \Windows{ntoskrnl.exe} & Provides the kernel and executive layers of the \Name{Windows NT} kernel space, and is responsible for various system services such as hardware virtualization, process and memory management, thus making it a fundamental part of the system.\\
     \Windows{user32.dll} & Implements the \Name{Windows} user component that creates and manipulates the standard elements of the Windows user interface, such as the desktop, windows, and menus.
  \end{tabularx}
  \cap{tab:SysCalls}{System Calls}{This table explains the \Name{Windows} system calls, data structures, registry keys and system files (in this order) that are used in the malware described in this section, in alphabetical order. Many of the entries are copied directly from the \Name{Microsoft Developer Network} (MSDN) documentation\cite{MSDN} or \Name{Wikipedia} for file descriptions\cite{Wikipedia}. Others are summaries of descriptions from later in the text with their own respective citations.}
\end{table*}
\newpage \clearpage

\bibliographystyle{ieee}
\bibliography{ids_survey}

%\begin{thebibliography}{1}
%
%\bibitem{IEEEhowto:kopka}
%H.~Kopka and P.~W. Daly, \emph{A Guide to {\LaTeX}}, 3rd~ed.\hskip 1em plus
%  0.5em minus 0.4em\relax Harlow, England: Addison-Wesley, 1999.
%
%\end{thebibliography}
%
% biography section
%
% If you have an EPS/PDF photo (graphicx package needed) extra braces are
% needed around the contents of the optional argument to biography to prevent
% the LaTeX parser from getting confused when it sees the complicated
% \includegraphics command within an optional argument. (You could create
% your own custom macro containing the \includegraphics command to make things
% simpler here.)
%\begin{IEEEbiography}[{\includegraphics[width=1in,height=1.25in,clip,keepaspectratio]{mshell}}]{Michael Shell}
% or if you just want to reserve a space for a photo:

\begin{IEEEbiographynophoto}{Ethan Rudd}
Ethan Rudd received his B.S. degree in Physics from Trinity University in 2012.
He received his M.S. degree in Computer Science from the University of Colorado at Colorado Springs in 2014.
He is currently pursuing his Ph.D. in Computer Science.
He works in the Vision and Security Technology (VAST) lab, conducting research in computer vision, biometrics, and applied machine learning.
\end{IEEEbiographynophoto}

% if you will not have a photo at all:
\begin{IEEEbiographynophoto}{Andras Rozsa}
Andras Rozsa received his M.Eng. degree in Information Technology from University of Veszprem, Hungary, in 2005. He received his M.S. degree in Computer Science from the University of Colorado at Colorado Springs (UCCS) in 2014. He is currently pursuing his Ph.D. in Engineering with a specialty in Security. He works in the Vision and Security Technology (VAST) lab at UCCS, researching machine learning and deep neural network applications to computer vision.
\end{IEEEbiographynophoto}

% insert where needed to balance the two columns on the last page with
% biographies
%\newpage

\begin{IEEEbiographynophoto}{Manuel G\"unther}
Dr. Manuel G\"unther received his diploma in Computer Science with a major subject of machine learning from the Technical University of Ilmenau, Germany, in 2004.
His doctoral thesis was written between 2004 and 2011 at the Ruhr University of Bochum, Germany, about statistical extensions of Gabor graph based face detection, recognition and classification techniques.
Between 2012 and 2015, Dr. G\"unther was a postdoctoral researcher in the Biometrics Group at the Idiap Research Institute in Martigny, Switzerland.
There, he was actively participating in the implementation of the open source signal processing and machine learning library Bob, particularly he was the leading developer of the bob.bio packages (\url{http://pythonhosted.org/bob.bio.base}).
Since 2015, Dr. G\"unther has been a research associate at the Vision and Security Technology lab at the University of Colorado at Colorado Springs.
Currently, he is part of the IARPA Janus research team and is occupied with incorporating facial attributes to build more reliable face recognition algorithms in uncontrolled imaging environments.

\end{IEEEbiographynophoto}
\begin{IEEEbiographynophoto}{Terrance Boult}
Dr. Terry Boult, El Pomar Professor of Innovation and Security at University of
Colorado Colorado Springs (UCCS),  does research in computer vision, machine
learning, biometrics and security. Prior to joining UCCS in 2003, he was an
endowed professor and founding chairman of Lehigh University's CSE Department
and from 1986-1992 was a faculty member at Columbia University.  He is an innovator with
a passion for combining teaching, research and business. He has won multiple
teaching, research, innovation, and entrepreneurial awards.  At University of
Colorado at Colorado Springs he was the architect of the awarding winning
Bachelor of Innovation\texttrademark family of degrees and a key member in founding the
UCCS Ph.D. in Engineering Security.   Dr. Boult has been involved with multiple
start up companies in the security space.   
\end{IEEEbiographynophoto}

% You can push biographies down or up by placing
% a \vfill before or after them. The appropriate
% use of \vfill depends on what kind of text is
% on the last page and whether or not the columns
% are being equalized.

%\vfill

% Can be used to pull up biographies so that the bottom of the last one
% is flush with the other column.
%\begin{appendices}
\end{document}